\newtheorem{theorem}{Theorem}
\newtheorem{corollary}{Corollary}
\newtheorem{lemma}{Lemma}
\newtheorem{observation}{Observation}
\newcommand{\numocc}{\#}
\newcommand{\occ}{occ}
\newcommand{\MUS}{\mathsf{MUS}}
\newcommand{\MMUS}{\mathsf{MMUS}}
\newcommand{\SUS}{\mathsf{SUS}}
\newcommand{\Pred}{\mathsf{Pred}}
\newcommand{\Succ}{\mathsf{Succ}}
\newcommand{\RmQ}{\mathsf{RmQ}}
\newcommand{\RMQ}{\mathsf{RMQ}}
\newcommand{\pcover}{\mathit{cover}}
\newcommand{\icover}{\mathit{cover}}
\newcommand*{\susname}[1]{{{\renewcommand{\rmdefault}{ptm}\fontfamily{ptm}\selectfont\textrm{\textup{#1}}}}} \newcommand{\lmSUS}{\susname{lmSUS}}
\newcommand{\rmSUS}{\susname{rmSUS}}
\newcommand{\lmMUS}{\susname{lmMUS}}
\newcommand{\rmMUS}{\susname{rmMUS}}
\newcommand{\MB}{\mathsf{MB}}
\newcommand{\ME}{\mathsf{ME}}
\newcommand{\SA}{\mathsf{SA}}
\newcommand{\ISA}{\mathsf{ISA}}
\newcommand{\LCP}{\mathsf{LCP}}
\newcommand{\PLCP}{\mathsf{PLCP}}
\newcommand{\succPLCP}{\mathit{succPLCP}}
\newcommand{\RankPrev}{\mathsf{RankPred}}
\newcommand{\RankNext}{\mathsf{RankSucc}}
\newcommand{\length}{\mathsf{L}}
\newcommand{\MUSbegin}{\mathsf{B}}
\newcommand{\lengthdiff}{\mathsf{LD}}
\newcommand{\predonepos}{\mathit{pred}\mathsf{1}\mathit{pos}}
\newcommand{\succonepos}{\mathit{succ}\mathsf{1}\mathit{pos}}
\newcommand{\predneq}{\mathit{predneq}}
\newcommand{\succneq}{\mathit{succneq}}
\newcommand{\X}{\mathsf{X}}
\newcommand{\Y}{\mathsf{Y}}
\newcommand{\MUSlen}{\mathsf{MUSlen}}
\newcommand{\rank}{\mathit{rank}}
\newcommand{\select}{\mathit{select}}
\newcommand{\nil}{\mathit{nil}}
\begin{document}
\title{Space-Efficient Algorithms for Computing Minimal/Shortest Unique Substrings}
\author[1]{Takuya~Mieno}
\author[1,2]{Dominik~K\"oppl}
\author[1]{Yuto~Nakashima}
\author[1,3]{Shunsuke~Inenaga}
\author[4]{Hideo~Bannai}
\author[1]{Masayuki~Takeda}
\affil[1]{Department of Informatics, Kyushu University.

\texttt{\{takuya.mieno,dominik.koeppl,\\yuto.nakashima,inenaga,takeda\}@inf.kyushu-u.ac.jp}}
\affil[2]{Japan Society for the Promotion of Science.}
\affil[3]{PRESTO, Japan Science and Technology Agency,}
\affil[4]{M\&D Data Science Center, Tokyo Medical and Dental University.

\texttt{hdbn.dsc@tmd.ac.jp}}
\date{}
\maketitle
\begin{abstract}
  Given a string $T$ of length~$n$,
  a substring $u = T[i..j]$ of~$T$ is called
  a shortest unique substring (SUS) for an interval $[s,t]$
  if
  (a) $u$ occurs exactly once in $T$,
  (b) $u$ contains the interval $[s,t]$~(i.e. $i \leq s \leq t \leq j$), and
  (c) every substring $v$ of $T$ with $|v| < |u|$ containing $[s,t]$ occurs at least twice in $T$.
  Given a query interval $[s, t] \subset [1, n]$,
  the \emph{interval SUS problem} is to output all the SUSs for the interval $[s,t]$.
  In this article, we propose a $4n + o(n)$ bits data structure answering an interval SUS query
  in output-sensitive $O(\occ)$ time, where $\occ$ is the number of returned SUSs.
  Additionally, we focus on the \emph{point SUS problem}, which is the interval SUS problem for $s = t$.
  Here, we propose a $\lceil (\log_2{3} + 1)n \rceil + o(n)$ bits data structure
  answering a point SUS query in the same output-sensitive time.
  We also propose space-efficient algorithms for computing the \emph{minimal unique substrings} of~$T$.
\end{abstract}

\section{Introduction} \label{sec:intro}
A substring $u = T[i..j]$ of a string $T$ is called
a \emph{shortest unique substring~(SUS)} for an interval $[s,t]$ if
(a) $u$ occurs exactly once in $T$,
(b) $u$ contains the interval $[s,t]$~(i.e., $i \leq s \leq t \leq j$), and
(c) every substring $v$ of $T$ with $|v| < |u|$ containing $[s,t]$ occurs at least twice in $T$.
Given a query interval $[s, t] \subset [1, n]$,
the \emph{interval SUS problem} is to output all the SUSs for $[s,t]$.
When a query interval consists of a single position (i.e., $s=t$),
the SUS problem becomes a so-called \emph{point} SUS problem.

\paragraph*{\bf Point SUS Problem}
The point SUS problem was introduced by Pei et al.~\cite{Pei2013SUS}.
This problem is motivated by applications in bioinformatics like
genome comparisons~\cite{Haubold2005SUS} or PCR primer design~\cite{Pei2013SUS}.
Pei et al. tackled this problem with an $O(n)$ words data structure that can return one SUS for a given query position in constant time.
They can compute this data structure in $O(n^2)$ time with $O(n)$ space.
Based on that result,  Tsuruta et al.~\cite{Tsuruta2014SUS} provided an $O(n)$ words
data structure answering the same query (returning one SUS) in constant time.
Their data structure can be constructed in $O(n)$ time.
{\.I}leri et al.~\cite{Ileri2015SUS} independently showed another data structure
with the same time complexities.
For the general point SUS problem, Tsuruta et al.~\cite{Tsuruta2014SUS}
can also resort to their proposed data structure returning all SUSs for a query position
in optimal $O(\occ)$ time, where $\occ$ is the number of returned SUSs.

The aforementioned data structures all take $\Theta(n)$ words.
This space can become problematic for large $n$.
This problem was perceived by Hon et al.~\cite{Hon2015inplaceSUS},
who proposed a data structure consisting of
the input string $T$ and two integer arrays, each of length $n$.
Both arrays store, respectively, the beginning and the ending position of
a SUS for each position $i$ with $1 \leq i \leq n$.
Hon et al.\ provided an algorithm that can construct these two arrays in linear time
with $O(\log n)$ bits of additional working space,
given that both arrays are stored in $2n \log n$ bits and that $\sigma \le n$.
Instead of building a data structure,
Ganguly et al. \cite{Ganguly2017SUS} proposed a time-space trade-off algorithm
using $O(n/\tau)$ words of additional working space,
answering a given query in $O(n\tau^2\log{\frac{n}{\tau}})$ time directly, for a trade-off parameter~$\tau \ge 1$.
They also proposed the first \emph{compact} data structure of size $4n+o(n)$ bits
that can answer a query in constant time.
They can construct this data structure in $O(n\log n)$ time using $O(n\log\sigma)$ bits of
additional working space.

\paragraph*{\bf Interval SUS Problem}
Hu et al.~\cite{Hu2014SUS} were the first to consider the interval SUS problem.
They proposed a data structure
answering a query returning all SUSs for the respective query interval in $O(\occ)$ optimal time
after $O(n)$ time preprocessing.
In the compressed setting,
Mieno et al.~\cite{Mieno2016SUSonRLE} considered the interval SUS problem
when the input string~$T$ is given \emph{run-length encoded (RLE)},
and proposed a data structure of size $O(r)$ words
answering a query by returning all SUSs for the respective query interval
in $O(\sqrt{\log r/\log\log r} + \occ)$ time, where $r$ is the number of single character runs in $T$.

\paragraph*{\bf Our Contribution} In this paper, we propose the following two data structures:
\begin{itemize}
    \item[(A)] A data structure of size $2n+2m+o(n)$ bits
               answering an interval SUS query in $O(\occ)$ time,
               where $m$ is the number of \emph{minimal unique substrings}
           of the input string\footnote{We show later in Lemma~\ref{lem:size_of_mus} that the number of minimal unique substrings~$m$ is at most $n$.},
               and $\occ$ is the number of SUSs of $T$ for the respective query interval
               (Theorem~\ref{thm:intervalSUS_DS}).
    \item[(B)] A data structure of size $\lceil(\log_2{3}+1)n\rceil+o(n)$ bits
               answering a point SUS query in $O(\occ)$ time,
               where $\occ$ is the number of SUSs of $T$ for the respective query point
               (Theorem~\ref{thm:pointSUS_compact}).
\end{itemize}

Instead of outputting the answer as a list of substrings of~$T$, it is sometimes sufficient to
output only the intervals corresponding to the respective substrings.
In such a case,
both data structures can answer a query \emph{without} the need of the input string.
The data structure (A) is the first data structure of size $O(n)$ bits
for the interval SUS problem.
Also, the data structure (B) is the first data structure of size $O(n)$ bits
for the point SUS problem, returning \emph{all} SUSs for a given query position.
Notice that the data structure of Ganguly et al.~\cite{Ganguly2017SUS} uses $4n+o(n)$ bits of space,
but returns only one SUS for a point SUS query.

Note that parts of this work have already been presented at
26th International Symposium of String Processing and Information Retrieval (SPIRE 2019)~\cite{mieno19sus}.
 \section{Preliminaries} \label{sec:pre}

Our model of computation is the word RAM with machine word size $\Omega(\log n)$.

\subsection{Strings} \label{subsec:notation}
Let $\Sigma$ be an alphabet.
An element of $\Sigma^*$ is called a \emph{string}.
For $|\Sigma| = 2$, we call a string also a bit array.
The length of a string $T$ is denoted by $|T|$.
The empty string $\varepsilon$ is the string of length 0.
Given a string $T$, the $i$-th character of $T$ is denoted by $T[i]$, for an integer~$i$ with $1 \leq i \leq |T|$.
For two integers~$i$ and $j$ with $1 \leq i \leq j \leq |T|$,
a substring of $T$ starting at position $i$ and ending at position $j$ is
denoted by $T[i..j]$. Namely, $T[i..j] = T[i]T[i+1]\cdots T[j]$.
For two strings $T$ and $w$,
the number of occurrences of $w$ in $T$ is denoted by $\numocc T(w) := |\{i \mid T[i..i+|w|-1] = w\}|$.
For two intervals $[i, j]$ and $[x, y]$,
let $\icover([i,j], [x,y]) := [\min\{i,x\}, \max\{j, y\}]$ denote the shortest interval
that contains the text positions $i, j, x,$ and $y$.
If the interval $[x,y]$ consists of a single point, i.e., $x = y$,
$\icover([i,j], [x,y])$ is denoted by $\icover([i,j], x)$ when we want to emphasize on the fact that $x=y$.

In what follows, we fix a string $T$ of length $n \ge 1$ whose characters are drawn from an integer alphabet $\Sigma$ of size $\sigma = n^{O(1)}$.

\subsection{MUSs and SUSs} \label{subsec:MUSandSUS}
Let $u$ be a non-empty substring of $T$.
$u$ is called a \emph{repeating substring} of $T$ if $\numocc T(u) \geq 2$,
and $u$ is called a \emph{unique substring} of $T$ if $\numocc T(u) = 1$.
Since every unique substring $u = T[i..j]$ of $T$ occurs exactly once in $T$,
we identify $u$ with its corresponding interval $[i, j]$.
We also say that the interval $[i, j]$ is unique
iff the corresponding substring $T[i..j]$ is a unique substring of $T$.

A unique substring $u = T[i..j]$ of $T$ is said to be a \emph{minimal unique substring} (\emph{MUS}) of $T$
iff every proper substring of $u$ is a repeating substring,
i.e., $\numocc T(T[i'..j']) \geq 2$ for every integer $i'$ and every integer $j'$
with $[i',j'] \subset [i,j]$ and $j'-i' < j-i$.
Let $\MUS_T :=  \{[i,j] \mid \mbox{$T[i..j]$ is a MUS of $T$}\}$ be the set of all intervals corresponding to the MUSs of $T$.
From the definition of MUSs, the next lemma follows:

\begin{lemma}[{\cite[Lemma~2]{Tsuruta2014SUS}}]\label{lem:size_of_mus}
    No element of $\MUS_T$ is nested in another
    element of $\MUS_T$, i.e.,
    two different MUSs $[i, j], [k, l] \in \MUS_T$ satisfy
    $[i, j] \not \subset [k, l]$ and $[k, l] \not \subset [i, j]$.
    Therefore, $0 < |\MUS_T| \leq |T|$.
\end{lemma}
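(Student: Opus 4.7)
The plan is to prove the non-nesting claim first, and then use it to derive the two cardinality bounds.

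For the non-nesting part, I would argue by contradiction. Suppose two distinct intervals $[i,j], [k,l] \in \MUS_T$ satisfy $[i,j] \subset [k,l]$. Since they are distinct, $[i,j]$ is a \emph{proper} subinterval of $[k,l]$, so $T[i..j]$ is a proper substring of $T[k..l]$. By the MUS property applied to $T[k..l]$, every proper substring of $T[k..l]$ is a repeating substring of $T$; in particular $T[i..j]$ must occur at least twice. But $T[i..j]$ is itself a MUS, hence unique in $T$, yielding $\numocc_T(T[i..j]) = 1$. This contradiction establishes the antichain property.

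For the lower bound $|\MUS_T| > 0$, I would observe that since $n \geq 1$, the string $T$ itself is a unique substring of $T$, so the set of unique substrings of $T$ is non-empty. Pick any unique substring $w$ of minimum length. Then every proper substring of $w$ is strictly shorter than $w$, so by minimality cannot be unique, i.e., must be repeating. Hence $w$ is a MUS and $\MUS_T \neq \emptyset$.

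For the upper bound $|\MUS_T| \leq n$, the idea is to use the antichain property to show that the starting positions of the MUSs are pairwise distinct. Indeed, if two distinct MUSs $[i,j]$ and $[i,l]$ shared a starting position $i$, then without loss of generality $j < l$, giving $[i,j] \subset [i,l]$ and contradicting non-nesting. Since there are only $n$ possible starting positions in $T$, we conclude $|\MUS_T| \leq n$.

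I do not anticipate a serious obstacle; the argument is essentially a definitional unfolding once one notices that ``proper substring of a MUS is repeating'' is exactly what is needed to block nesting. The only place requiring slight care is making sure ``$[i,j] \subset [k,l]$ with the intervals different'' really means a \emph{proper} substring relationship in $T$, so that the MUS defining property of $T[k..l]$ can be invoked.
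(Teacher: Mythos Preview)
Your argument is correct in all three parts: the contradiction for non-nesting, the existence of a MUS via a shortest unique substring, and the injection of MUSs into starting positions for the upper bound. Note, however, that the paper does not supply its own proof of this lemma at all; it merely quotes the statement from \cite[Lemma~2]{Tsuruta2014SUS} and moves on. So there is nothing in the present paper to compare against---your write-up is essentially the standard elementary proof one would expect to find in the cited source.
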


We use the following two sets containing interval and point SUSs, which were defined at the beginning of the introduction:
Given an interval $[s,t] \subset [1,n]$,
$\SUS_T([s,t])$ denotes the set of the interval SUSs of $T$ for the interval $[s, t]$.
Given a text position $p \in [1,n]$,
$\SUS_T(p)$ denotes the set of the point SUSs of $T$ for the point $p$.

Given a query position $p \in [1,n]$ (resp. a query interval $[s,t] \subset [1, n]$),
the \emph{point} (resp. \emph{interval}) \emph{SUS~problem} is to compute $\SUS_T(p)$ (resp. $\SUS_T([s, t])$).
See Fig.~\ref{fig:MUSandSUS} for an example depicting MUSs and SUSs.

\begin{figure}[t]
    \centerline{\includegraphics[width=0.8\linewidth]{./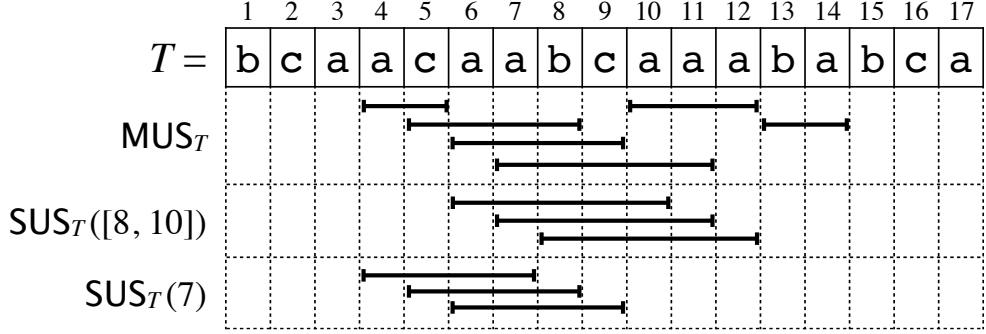}}
    \caption{
        The string $T = \mathtt{bcaacaabcaaababca}$,
        and its set $\MUS_T =$ $\{[4,5],$ $ [5,8],$ $ [6,9],$ $ [7,11],$ $ [10,12],$ $ [13,14]\}$.
        $\MUS_T$ corresponds to the set
        $\{\mathtt{ac},$ $ \mathtt{caab},$ $ \mathtt{aabc},$ $ \mathtt{abcaa},$ $ \mathtt{aaa},$ $ \mathtt{ba}\}$
        of all MUSs of $T$.
        The substrings
        $T[6..10] = \mathtt{aabca}$,
        $T[7..11] = \mathtt{abcaa}$, and
        $T[8..12] = \mathtt{bcaaa}$ are SUSs for the query interval $[8,10]$.
        Also, the substrings
        $T[4..7] = \mathtt{acaa}$,
        $T[5..8] = \mathtt{caab}$, and
        $T[6..9] = \mathtt{aabc}$ are SUSs for the query position $7$.
        The later defined leftmost/rightmost SUS and MUS (cf.\ Sect.~\ref{sec:pointSUS}) for $p = 7$ are
        $\protect\lmSUS_T^p = [4, 7]$,
        $\protect\lmMUS_T^p = [4, 5]$,
        $\protect\rmSUS_T^p = [6, 9]$, and
        $\protect\rmMUS_T^p = [6, 9]$.
    }\label{fig:MUSandSUS}
\end{figure}
 \section{Tools} \label{sec:tools}
In this section, we introduce the data structures needed for our approach to solve both SUS problems.

\subsection{Rank and Select}
Given a string $X$ of length $n$ over the alphabet $[1, \sigma]$.
For an integer $i$ with $1 \leq i \leq n$ and a character $c \in [1,\sigma]$,
the rank query $\rank_X(c, i)$ returns the number of the character $c$
in the prefix $X[1..i]$ of $X$.
Also, the select query $\select_X(c, i)$ returns the position of $X$
containing the $i$-th occurrence of the character $c$
(or returns the invalid symbol~$\nil$ if such a position does not exist).
For $\sigma = 2$ (i.e., $X$ is a bit array), we can make use of the following lemma:

\begin{lemma}[{\cite{Jacobson1989rank,Clark1998select}}]\label{lemRank}
We can endow a bit array~$X$ of length $n$ with a data structure answering $\rank_X$ and $\select_X$ in constant time.
This data structure takes $o(n)$ bits of space, and can be built on $X$ in $O(n)$ time with $O(\log n)$ bits of additional working space.
\end{lemma}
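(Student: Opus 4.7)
The plan is to present the classical two-level directory construction for $\rank_X$ and to sketch Clark's variable-granularity scheme for $\select_X$, then argue that both structures can be built by a single left-to-right scan of $X$ using only $O(\log n)$ bits of extra working memory.

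For $\rank_X$, I would partition $X$ into superblocks of length $s = \lfloor \log^2 n \rfloor$ and each superblock into blocks of length $b = \lfloor (\log n)/2 \rfloor$. A first directory stores, for every superblock, the cumulative number of $1$s up to its left boundary, consuming $O(n/\log n)$ bits in total. A second directory stores, for every block, the number of $1$s inside its superblock up to the block's left boundary, needing $O(\log s) = O(\log\log n)$ bits per entry and $O(n\log\log n / \log n) = o(n)$ bits overall. The intra-block count is resolved with a universal lookup table indexed by a pair (block content, offset); since blocks have at most $2^b = \sqrt{n}$ distinct contents and the offset takes $b = O(\log n)$ values, the table contains $O(\sqrt{n}\,\log n)$ entries of $O(\log\log n)$ bits, hence $o(n)$ bits. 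A query decomposes into three constant-time lookups plus two additions.

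For $\select_X$, I would follow Clark's design: sample every $(\log n \log\log n)$-th one-bit of $X$ and store its position explicitly, using $O(n/\log\log n)$ bits. Each gap between consecutive samples is classified as \emph{sparse} if it spans more than $(\log n \log\log n)^2$ positions, in which case every one-bit position inside it is stored verbatim (still $o(n)$ bits by a sparsity accounting), or as \emph{dense}, in which case it is recursively partitioned at every $(\log\log n)^2$-th one-bit and the leaf query is resolved with a second universal table over bitstrings of length $(1/2)\log n$. Summing sparse buckets, dense buckets, and secondary tables yields $o(n)$ bits, and a query descends through a constant number of levels of table lookups.

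For the construction, I would stream through $X$ once, maintaining only the running count of one-bits and writing out the superblock and block counters as their boundaries are crossed; this keeps live state at $O(\log n)$ bits. A second streaming pass emits Clark's sampled positions in order of occurrence, so again no large buffer is required. The universal lookup tables depend only on $n$, so they are precomputed once in $o(n)$ time and $o(n)$ bits. The main obstacle is accounting for the select structure: one must verify that the sparse-bucket, dense-bucket, and secondary-table contributions all simultaneously fit in $o(n)$ bits under the chosen sampling rates, and that each directory entry can be emitted immediately at its boundary so that nothing larger than $O(\log n)$ bits ever resides in working memory during construction.
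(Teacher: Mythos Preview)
The paper does not prove this lemma at all: it is stated as a citation of Jacobson and Clark, with no accompanying argument. Your proposal therefore goes well beyond what the paper does, supplying the standard two-level directory for $\rank$ and Clark's multi-granularity scheme for $\select$. As a reconstruction of those classical results, your sketch is essentially correct and is the expected content behind the citation.

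One point that deserves more care is the $O(\log n)$-bit working-space bound for \emph{constructing} the $\select$ structure. Your streaming argument is clean for the $\rank$ directories, but for Clark's buckets you cannot decide whether a bucket is sparse or dense until you have seen its right boundary, i.e., the next sampled $1$-bit; if the bucket turns out to be sparse you must have recorded every $1$-position inside it. A single left-to-right pass with only $O(\log n)$ live bits cannot buffer those positions. The usual fix is a constant number of passes: a first pass writes only the sampled boundary positions (which already sit inside the final $o(n)$-bit structure, not in working memory), a second pass reads adjacent boundaries to classify each bucket and allocate its slot, and a third pass fills the slots. Each pass keeps only counters and the current boundary pair, so working memory stays at $O(\log n)$ bits. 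Making this explicit would close the only real gap in your argument.
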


\subsection{Predecessor and Successor}\label{secPredecessor}
Let $Y$ be an array of length $k$ whose entries are positive integers in strictly increasing order.
Further suppose that these integers are less than or equal to $n$.
Given an integer $d$ with $1 \leq d \leq n$,
the \emph{predecessor} and the \emph{successor query} on $Y$ with $d$ are defined as
$\Pred_Y(d) := \max\{ i \mid Y[i] \leq d\}$ and
$\Succ_Y(d) := \min\{ i \mid Y[i] \geq d\}$, where we stipulate that $\min\{\} = \max\{\} = \nil$.

Let $BIT_Y$ be a bit array of length $n$ marking all integers present in $Y$, i.e.,
$BIT_Y[i]=\mathtt{1}$ iff there is an integer $j$ with $1 \le j \le k$ and $Y[j] = i$,
for every $i$ with $1 \leq i \leq n$.
By endowing $BIT_Y$ with a rank/select data structure,
we yield an $n+o(n)$ bits data structure answering
$\Pred_Y(d) = \select_{BIT_Y}(\mathtt{1}, \rank_{BIT_Y}(\mathtt{1},d))$
and $\Succ_Y(d)$\footnote{$\Succ_Y(d)$ can be computed similarly by considering the case whether $BIT_Y[d] =\mathtt{1}$.}
in constant time for each $d$ with $1 \leq d \leq n$.

\subsection{RmQ and RMQ}
Given an integer array $Z$ of length $n$ and an interval $[i,j] \subset [1,n]$,
the range minimum query $\RmQ_Z(i, j)$ (resp.\ the range maximum query $\RMQ_Z(i, j)$)
asks for the index~$p$ of a minimum element (resp.\ a maximum element) of
the subarray $Z[i..j]$, i.e.,
$p \in \arg\min_{i\leq k \leq j} Z[k]$, or respectively $p \in \arg\max_{i\leq k \leq j} Z[k]$.
We use the following well-known data structure to handle these kind of queries:
\begin{lemma}[\cite{Davoodi2012LCA}]\label{lemRMQ}
    Given an integer array $Z$ of length $n$,
    there is an $\RmQ$ (resp.\ $\RMQ$) data structure taking $2n+o(n)$ bits of space that can answer an $\RmQ$ (resp.\ $\RMQ$) query on $Z$ in constant time.
    This data structure can be constructed in $O(n)$ time with $o(n)$ bits of additional working space.
\end{lemma}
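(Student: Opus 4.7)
The plan is to reduce $\RmQ$ (and symmetrically $\RMQ$) on $Z$ to lowest-common-ancestor (LCA) queries on a succinctly encoded \emph{Cartesian tree} $\mathcal{C}(Z)$. Recall that $\mathcal{C}(Z)$ is the binary tree whose root is a position $p$ attaining the minimum of $Z$ and whose left and right subtrees are recursively $\mathcal{C}(Z[1..p-1])$ and $\mathcal{C}(Z[p+1..n])$. A classical observation is that $\RmQ_Z(i,j)$ corresponds to the LCA of the $i$-th and $j$-th nodes of $\mathcal{C}(Z)$ in in-order, so it suffices to support LCA on $\mathcal{C}(Z)$ in constant time within $2n + o(n)$ bits. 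I would prove correctness of this reduction by induction on $j - i$, splitting the subarray $Z[i..j]$ at any minimum.

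Next, I would encode $\mathcal{C}(Z)$ by its balanced-parenthesis (BP) bit sequence of length $2n$, and augment it with the standard $o(n)$-bit auxiliary structures (rank/select on the parenthesis bit plus range-min summaries over micro-blocks) that support LCA together with node/position translation in constant time. Mapping an array index $i$ to its tree node uses a constant-time $\select$ on the opening-parenthesis bit, and the inverse uses $\rank$, both available via Lemma~\ref{lemRank}. The $\RMQ$ variant follows by building the Cartesian tree with respect to maxima, with no change in the query mechanism or bit layout.

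For the construction, I would run the classical monotone-stack sweep that produces $\mathcal{C}(Z)$ in $O(n)$ time: for each $i = 1, \dots, n$, repeatedly pop entries whose value exceeds $Z[i]$ (emitting a \texttt{)} to the BP output) and then push $i$ (emitting a \texttt{(}); finally flush the remaining closing parentheses. This writes the BP encoding directly, without materializing pointer-based tree nodes, and the LCA auxiliary structures can be layered on top of the resulting $2n$-bit sequence in $O(n)$ time with $o(n)$ additional bits by the usual block decomposition.

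The main obstacle is keeping the construction-time workspace at $o(n)$ bits, since the monotone stack nominally stores $\Theta(\log n)$-bit indices and would naively grow to $\Theta(n \log n)$ bits. The way around it is to avoid storing the stack explicitly: at any step $i$, the stack contents coincide with the positions of the currently unmatched opening parentheses in the partial BP output, so the ``top'' needed for the comparison against $Z[i]$ can be recovered from the BP prefix itself with a thin $o(n)$-bit summary index over recently-written blocks that is refreshed in amortized $O(1)$ time per emitted parenthesis. This implicit-stack trick is precisely the delicate point where I would invoke the space-efficient construction framework of Davoodi et al.\ to certify that the $o(n)$-bit workspace bound is indeed achievable in $O(n)$ time.
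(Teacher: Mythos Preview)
The paper does not prove this lemma at all: it is stated as a citation to Davoodi, Raman, and Satti, with no accompanying proof in the text. So there is nothing in the paper to compare your argument against; you are supplying an independent justification where the authors simply invoke the literature.

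As a sketch of the cited result your outline is broadly on target: the reduction of $\RmQ$ to LCA on the Cartesian tree, the $2n$-bit parenthesis encoding plus $o(n)$-bit auxiliary structures, and the identification of the monotone-stack workspace as the bottleneck are exactly the ingredients in the Fischer--Heun / Davoodi et al.\ line of work. One caution: the emission rule you describe (``\texttt{(}'' on push, ``\texttt{)}'' on pop) does not produce the preorder BP sequence of the Cartesian tree as usually defined; what it yields is closer to the encoding of the so-called 2d-min-heap (or, equivalently, a forest that needs a virtual super-root). This does not break the approach, but if you were to flesh this out you would need to be precise about which tree and which encoding you mean, since the LCA-to-$\RmQ$ correspondence and the in-order node numbering hinge on that choice. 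Your final paragraph correctly defers the $o(n)$-workspace implicit-stack argument to Davoodi et al., which is appropriate given that this is the genuinely delicate part and is exactly what the cited paper contributes.
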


\subsection{Suffix Array and some Related Arrays}
We define six integer arrays
$\SA_T[1..n]$, $\ISA_T[1..n]$, $\LCP_T[1..n+1]$,
$\PLCP_T[1..n]$, $\RankPrev_T[1..n]$ and $\RankNext_T[1..n]$.
The suffix array $\SA_T$ of $T$ is the array with the property that
$T[\SA_T[i]..n]$ is lexicographically smaller than $T[\SA_T[i+1]..n]$
for every $i$ with $1 \leq i \leq n-1$~\cite{manber93sa}.
The inverse suffix array $\ISA_T$ of $T$ is the inverse of $\SA_T$, i.e.,
$\SA_T[\ISA_T[i]] = i$ for every $i$ with $1 \leq i \leq n$.
The longest common prefix array $\LCP_T$ of $T$ is the array with the property that
$\LCP_T[1] = \LCP_T[n+1] = 0$ and
$\LCP_T[i] = lcp(T[\SA_T[i]..n], T[\SA_T[i-1]..n])$ for every $i$ with $2 \leq i \leq n$,
where $lcp(P, Q)$ denotes the length of the longest common prefix of $P$ and $Q$
for two given strings $P$ and $Q$.
The permuted LCP array $\PLCP_T$ of $T$ is the array storing the values of $\LCP_T$
in text position order (instead of suffix array order), i.e.,
$\PLCP_T[i] = \LCP_T[\ISA_T[i]]$ for every $i$ with $1 \leq i \leq n$.
The rank predecessor array $\RankPrev_T$ of $T$ is the array with the property that
$\RankPrev_T[\SA_T[1]] = \nil$ and
$\RankPrev_T[\SA_T[i]] = \SA_T[i-1]$ for every $i$ with $2 \le i \le n$.
This array is also known as the $\Phi$ array in literature (e.g., \cite{Karkkainen2009PLCP, Goto2014LZ}).
The rank successor array $\RankNext_T$ of $T$ is the array with the property that
$\RankNext_T[SA[n]] = \nil$ and
$\RankNext_T[SA_T[i]] = SA_T[i+1]$ for every $i$ with $1 \le i \le n-1$.

\section{Computing MUSs in Compact Space} \label{sec:MUScompact}
For computing SUSs efficiently, it is advantageous to have a data structure available
that can retrieve MUSs starting or ending at specific positions, as the following lemma gives a crucial connection between MUSs and SUSs:

\begin{lemma}[{\cite[Lemma~2]{Tsuruta2014SUS}}] \label{lemOneMUS}
    Every point SUS contains exactly one MUS.
\end{lemma}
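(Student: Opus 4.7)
The plan is to establish existence and uniqueness separately for a point SUS $u = T[i..j]$ at position $p$.

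For existence, I consider the family of unique substrings of $T$ that lie inside $u$; this family is non-empty because $u$ itself is unique. Picking a member $v$ of minimal length, any proper substring of $v$ is also a substring of $u$, and must be repeating in $T$ by the minimality of $v$ (otherwise it would be a shorter unique substring inside $u$). Hence $v$ is a MUS contained in $u$.

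For uniqueness, I argue by contradiction. Suppose $u$ contains two distinct MUSs $m_1 = T[i_1..j_1]$ and $m_2 = T[i_2..j_2]$. Lemma~\ref{lem:size_of_mus} forces these into a non-nested configuration, so without loss of generality $i_1 < i_2$ and $j_1 < j_2$. I then exhibit in each case a strictly shorter unique substring of $T$ that still contains $p$, contradicting the SUS property of $u$. If $p \le j_1$, the prefix $T[i..j_1]$ contains $m_1$ and is therefore unique; it still covers $p$ and is strictly shorter than $u$ since $j_1 < j_2 \le j$. Symmetrically, if $p \ge i_2$, the suffix $T[i_2..j]$ works. The remaining case $j_1 < p < i_2$ can arise only when the two MUSs are disjoint; here I use $T[i..i_2-1]$, which contains $m_1$ (so is unique), covers $p$ because $i \le i_1 \le p$ and $p \le i_2-1$, and is strictly shorter than $u$ because $i_2 \le j_2 \le j$.

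The only real obstacle is the middle case where $p$ lies strictly between two disjoint MUSs, since a one-sided trim of $u$ down to $T[i..j_1]$ or $T[i_2..j]$ loses the point $p$. It is resolved by trimming $u$ on the right so that its new endpoint lies just before the starting position of $m_2$; here the non-nesting guaranteed by Lemma~\ref{lem:size_of_mus} is exactly what ensures that $m_1$ survives the trim and therefore that the trimmed substring remains unique.
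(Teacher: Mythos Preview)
Your proof is correct. Note, however, that the paper does not give its own proof of this lemma: it is stated with a citation to \cite[Lemma~2]{Tsuruta2014SUS} and used as a black box. So there is no in-paper argument to compare against; you have supplied a clean, self-contained proof where the paper simply defers to prior work.

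For the record, the argument in \cite{Tsuruta2014SUS} follows essentially the same outline as yours: existence by taking a shortest unique substring contained in the SUS, and uniqueness by exhibiting a strictly shorter unique substring covering $p$ whenever two non-nested MUSs lie inside the SUS. Your case split (trimming on the right to $j_1$, on the left to $i_2$, or---in the disjoint gap case---on the right to $i_2-1$) is exactly the natural way to realise this, and your use of Lemma~\ref{lem:size_of_mus} to rule out nesting is the correct ingredient.
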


Fig.~\ref{fig:overview}  gives an overview of our introduced data structure and
shows the connections between this section and the following sections that focus on our two SUS problems.
For our data structure retrieving MUSs, we propose a compact representation and
an algorithm to compute this representation space-efficiently.
Our data structure is based on the following two bit arrays $\MB_T$ and $\ME_T$ of length $n$ with the properties that
\begin{itemize}
    \item $\MB_T[i] = \mathtt{1}$ iff $i$ is the beginning position of a MUS, and
    \item $\ME_T[i] = \mathtt{1}$ iff $i$ is the ending position of a MUS.
\end{itemize}
For the rest of this paper, let $m$ be the number of MUSs in $T$.
We rank the MUSs by their starting positions in the text, such that
the $j$-th MUS starts before the $(j+1)$-th MUS, for every integer~$j$ with $1 \leq j \leq m-1$.

Since MUSs are not nested (see Lemma~\ref{lem:size_of_mus}), the number of $\mathtt{1}$'s in
$\MB_T$ and $\ME_T$ is exactly $m$.
Hence,
the starting position, the ending position, and the length
of the $j$-th MUS can be computed with rank/select queries
for every integer~$j$ with $1 \leq j \leq m$.
How $\MB_T$ and $\ME_T$ can be computed is shown in the following lemma:

\begin{lemma}\label{lemBuildBitVectors}
    Let $\mathcal{D}_T$ be a data structure that can access $\ISA_T[i]$ and $\LCP_T[i]$ in $\pi_a(n)$ time
    for every position~$i$ with $1 \leq i \leq n$.
    Suppose that we can construct it in $\pi_c(n)$ time with $\pi_s(n)$ bits of working space
    including the space for $\mathcal{D}_T$.
    Then $\MB_T$ and $\ME_T$ can be computed
    in $O(\pi_c(n) + n\cdot \pi_a(n))$ total time
    while using $2n + \pi_s(n)$ bits of total working space
    including the space for $\MB_T$ and $\ME_T$.
\end{lemma}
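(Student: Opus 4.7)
The plan is to scan the text once from left to right and, at each position $i$, decide whether the shortest unique substring starting at~$i$ is a MUS. After constructing $\mathcal{D}_T$ in $\pi_c(n)$ time within $\pi_s(n)$ bits, I would allocate $\MB_T$ and $\ME_T$ as two $n$-bit arrays initialized to $\mathtt{0}$, and use only $O(\log n)$ further bits of scratch memory during the scan.

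The correctness engine is a local characterization of MUSs in terms of LCP values. For every position~$i$, set $L_i := \max(\LCP_T[\ISA_T[i]], \LCP_T[\ISA_T[i]+1])$; by a standard suffix-array argument, $L_i$ is the length of the longest repeating prefix of the suffix $T[i..n]$. If $L_i = n-i+1$, then no unique substring starts at~$i$; otherwise $T[i..i+L_i]$ is the shortest unique substring starting at~$i$, and I claim that it is a MUS iff $L_{i+1} \ge L_i$, under the convention $L_{n+1} := \infty$. To see this, observe that every proper substring of $T[i..i+L_i]$ is a substring of $T[i..i+L_i-1]$ or of $T[i+1..i+L_i]$, and that a substring of a repeating string is itself repeating (two distinct occurrences of a string $w$ translate into two distinct occurrences of any of its substrings). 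Hence $T[i..i+L_i]$ is a MUS iff both $T[i..i+L_i-1]$ and $T[i+1..i+L_i]$ are repeating; the former holds by the very definition of $L_i$, while the latter is equivalent to the shortest unique substring starting at $i+1$ being strictly longer than $L_i$, i.e., $L_{i+1} + 1 > L_i$.

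With this characterization in hand, the algorithm iterates $i = 1, \ldots, n$, performs a constant number of queries through $\mathcal{D}_T$ to read $L_i$ and $L_{i+1}$ (the value $L_{i+1}$ can be carried over to the next iteration to avoid redundant accesses), and, whenever the condition is met, sets $\MB_T[i] \gets \mathtt{1}$ and $\ME_T[i+L_i] \gets \mathtt{1}$. The total running time is $\pi_c(n) + O(n \cdot \pi_a(n))$ and the peak working memory is $2n + \pi_s(n)$ bits, matching the claim. The main obstacle I foresee is verifying that the single local test $L_{i+1} \ge L_i$ correctly handles the boundary cases --- singleton MUSs with $L_i = 0$, the last position $i = n$, and positions with no unique prefix ($L_i = n - i + 1$) --- but all of these are absorbed by the convention $L_{n+1} := \infty$ together with a guard that skips marking when $L_i = n - i + 1$.
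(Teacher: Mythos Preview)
Your proposal is correct and follows essentially the same approach as the paper: both define $\ell_i = L_i = \max\{\LCP_T[\ISA_T[i]], \LCP_T[\ISA_T[i]+1]\}$, observe that $T[i..i+\ell_i]$ is the shortest unique substring starting at~$i$ (when it exists), and test the MUS property via the single comparison $\ell_{i+1} \ge \ell_i$. Your write-up is slightly more explicit about the boundary cases and the reduction to the two maximal proper substrings, but the argument is the same.
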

\begin{proof}
    Given a text position~$i$ with $1 \leq i \leq n$,
    $T[i..i+\ell_i-1]$ with $\ell_i = \max\{\LCP_T[\ISA_T[i]], \LCP_T[\ISA_T[i]+1]\}$
    is the longest repeating substring starting at~$i$.
    If we extend this substring by the character to its right,
    it becomes unique.
    Thus, $T[i..i+\ell_i]$ is the shortest unique substring starting at $i$, except for the case
    that $i+\ell_i-1 = n$ as we cannot extend it to the right (hence, there is no unique substring starting at $i$ in this case).
    Additionally, the substring $T[i..i+\ell_i]$ is a MUS iff $T[i+1..i+\ell_{i}]$ is not unique
    (we already checked that $T[i..i+\ell_i-1]$ is not unique).
    $T[i+1..i+\ell_{i}]$ is not unique iff $\ell_i \leq \ell_{i+1}$
    since $T[i+1..i+1+\ell_{i+1}]$ is the shortest unique substring starting at $i+1$.
    Since each $\ell_i$ can be computed in $O(\pi_a(n))$ time for every $1\leq i \leq n$,
    the starting and ending positions of all MUSs (and hence, $\MB_T$ and $\ME_T$) can be computed
    in $O(n\cdot \pi_a(n))$ time by a linear scan of the text.
    Therefore, the total computing time is $O(\pi_c(n) + n\cdot \pi_a(n))$
    and the 
    total working space is $2n + \pi_s(n)$ bits including the space for $\MB_T$ and $\ME_T$.
\end{proof}

\begin{figure}[ht]
    \centerline{\includegraphics[width=0.8\linewidth]{./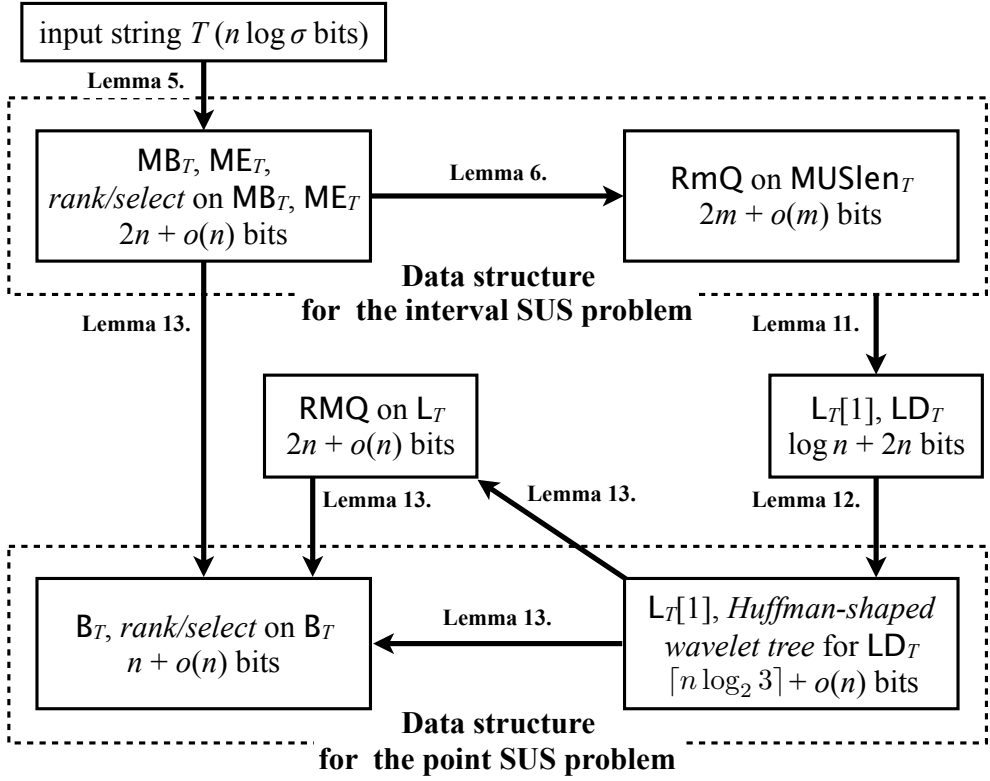}}
    \caption{Overview of the data structures proposed for solving the interval SUS and point SUS problem.
        Nodes are data structures.
        Edges of the same label (labeled by a certain lemma) describe an algorithm taking a set of input data structures to produce a data structure.
    }\label{fig:overview}
\end{figure}
 \section{Compact Data Structure for the Interval SUS Problem}\label{sec:intervalSUS}

In this section, we propose a compact data structure for the interval SUS problem.
It is based on the data structure of Mieno et al.~\cite{Mieno2016SUSonRLE},
which we review in the following.
We subsequently provide a compact representation of this data structure.

\paragraph*{\bf Data structures}
The data structure proposed by Mieno et al.~\cite{Mieno2016SUSonRLE}
consists of three arrays, each of length $m$:
$\X_T$, $\Y_T$, and $\MUSlen_T$.
The arrays $\X_T$ and $\Y_T$ store, respectively,  the beginning positions and ending positions of all MUSs
sorted by their beginning positions
such that the interval $[\X_T[i], \Y_T[i]]$ is the $i$-th MUS,
for every integer~$i$ with $1 \leq i \leq m$.
Further, $\MUSlen_T[i] = \Y_T[i] - \X_T[i] + 1$ stores the length of $i$-th MUS.
During a preprocessing phase, $\X_T$ and $\Y_T$ are endowed with
a successor and a predecessor data structure, respectively.
Further, $\MUSlen_T$ is endowed with an $\RmQ$ data structure.

\paragraph*{\bf Answering queries}
Given a query interval $[s, t]$,
let $\ell = \Pred_{Y_T}(t)$ be the index in~$Y_T$ of the largest ending position of a MUS that is at most $t$,
and $r = \Succ_{X_T}(s)$ be the index in~$X_T$ of the smallest starting position of a MUS that is at least $s$.
Then, $\SUS_T([s,t]) \subset \{\icover([s,t], [X_T[i], Y_T[i]]) \mid \ell \leq i \leq r\}$.
That is because the shortest intervals in
$\{\icover([s,t], [X_T[i], Y_T[i]]) \mid \ell \leq i \leq r\}$ correspond to the shortest unique substrings (SUSs)
among all substrings covering the interval $[s, t]$.
Thus, one of the SUSs for $[s, t]$ can be detected by considering
$\icover([s,t], [X_T[\ell], Y_T[\ell]])$ (as a candidate for the leftmost SUS),
$\icover([s,t], [X_T[r], Y_T[r]])$ (as a candidate for the rightmost SUS), and
$\RmQ_{\MUSlen_T}(\ell+1, r-1)$.
To output all SUSs, it is sufficient to answer $\RmQ$ queries on
subintervals of $\MUSlen_T[\ell+1..r-1]$ recursively.
In detail, suppose that there is a MUS in $\MUSlen_T[\ell+1..r-1]$ that is a SUS for $[s, t]$.
Further suppose that this is the $j$-th MUS having length~$k$.
Then we query $\MUSlen_T[\ell+1..j-1]$ and $\MUSlen_T[j+1..r-1]$ for all other MUSs of minimal length $k$.

\paragraph*{\bf Compact Representation}
Having the two bit arrays $\MB_T$ and $\ME_T$ of Section~\ref{sec:MUScompact},
we can simulate the three arrays $\X_T$, $\Y_T$, and $\MUSlen_T$.
By endowing these two bit arrays with rank/select data structures of Lemma~\ref{lemRank},
we can compute rank/select in constant time, which allows us to compute the value of
$\X_T[p]$, $\Y_T[p]$, $\MUSlen_T[p]$, $\Pred_{Y_T}(q)$ and $\Succ_{X_T}(q)$
for every index~$p$ with $1 \leq p \leq m$ and every text position~$q$ with $1\leq q \leq n$ in constant time while
using only $2n+o(n)$ bits of total space.
By endowing $\MUSlen_T$ with the $\RmQ$ data structure of Lemma~\ref{lemRMQ}, we can answer an $\RmQ$ query on $\MUSlen_T$ in constant time. This data structure takes $2m+o(m)$ bits of space.
Altogether, with these data structures we yield the following theorem:

\begin{theorem}\label{thm:intervalSUS_DS}
    For the interval SUS problem,
    there exists a data structure of size
    $2n + 2m + o(n)$ bits that
can answer an interval SUS query in $O(\occ)$ time,
    where $\occ$ is the number of SUSs of $T$ for the respective query interval.
\end{theorem}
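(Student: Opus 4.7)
The plan is to cast the three integer arrays $\X_T$, $\Y_T$, and $\MUSlen_T$ of Mieno et al.~\cite{Mieno2016SUSonRLE} (each of which would cost $\Theta(m \log n)$ bits if stored explicitly) as implicit data derived from the two bit arrays $\MB_T$ and $\ME_T$ of Section~\ref{sec:MUScompact}. I would endow $\MB_T$ and $\ME_T$ with the rank/select data structure of Lemma~\ref{lemRank}, and additionally build the RmQ data structure of Lemma~\ref{lemRMQ} on the (implicitly represented) $\MUSlen_T$.

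The first step is to verify that every array access and every predecessor/successor query performed by Mieno et al.'s algorithm reduces to a constant number of rank/select operations. Because MUSs are ranked by their starting positions and are pairwise non-nested by Lemma~\ref{lem:size_of_mus}, the $p$-th MUS corresponds to the $p$-th $\mathtt{1}$ in both $\MB_T$ and $\ME_T$, so $\X_T[p] = \select_{\MB_T}(\mathtt{1}, p)$, $\Y_T[p] = \select_{\ME_T}(\mathtt{1}, p)$, and $\MUSlen_T[p] = \Y_T[p] - \X_T[p] + 1$. The two search queries are equally cheap: $\Pred_{Y_T}(t) = \rank_{\ME_T}(\mathtt{1}, t)$ and $\Succ_{X_T}(s) = \rank_{\MB_T}(\mathtt{1}, s-1) + 1$, with the usual boundary conventions returning $\nil$ when out of range. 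Each such operation runs in $O(1)$ time by Lemma~\ref{lemRank}.

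The space bound then follows by addition: $\MB_T$ and $\ME_T$ contribute $2n$ bits, their rank/select overheads add $o(n)$ bits, and since $m \leq n$ by Lemma~\ref{lem:size_of_mus} the RmQ structure on $\MUSlen_T$ takes $2m + o(m) \subseteq 2m + o(n)$ bits, for a total of $2n + 2m + o(n)$ bits. The query algorithm is the one recalled in the excerpt: compute $\ell$ and $r$ in $O(1)$, determine the minimum SUS length $L$ from the three candidates $\icover([s,t], [\X_T[\ell], \Y_T[\ell]])$, $\icover([s,t], [\X_T[r], \Y_T[r]])$, and $\MUSlen_T[\RmQ_{\MUSlen_T}(\ell+1, r-1)]$, and then enumerate all indices $i \in [\ell+1..r-1]$ with $\MUSlen_T[i] = L$ by recursive RmQ. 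For such interior indices one has $\X_T[i] < s$ and $\Y_T[i] > t$ by the choice of $\ell$ and $r$, so $\icover([s,t], [\X_T[i], \Y_T[i]]) = [\X_T[i], \Y_T[i]]$ and its length is exactly $\MUSlen_T[i]$, which justifies the reduction to a pure RmQ enumeration on $\MUSlen_T$.

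The part that requires the most care is the output-sensitive $O(\occ)$ bound on the enumeration, and the standard RmQ-splitting argument handles it: each recursive call on a subinterval $[a,b]$ performs one RmQ in $O(1)$; if the returned index $j$ satisfies $\MUSlen_T[j] = L$, we output $[\X_T[j], \Y_T[j]]$ and recurse on $[a..j-1]$ and $[j+1..b]$, otherwise $\MUSlen_T[j] > L$ and we terminate immediately (the case $\MUSlen_T[j] < L$ cannot occur since $L$ is the global minimum). Hence the recursion tree has exactly one internal node per reported SUS and at most two children per internal node, giving $O(\occ)$ nodes in total; multiplied by the $O(1)$ cost per node, this yields the claimed $O(\occ)$ query time and completes the proof of Theorem~\ref{thm:intervalSUS_DS}.
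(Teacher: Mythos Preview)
Your proposal is correct and follows essentially the same approach as the paper: simulate $\X_T$, $\Y_T$, $\MUSlen_T$, $\Pred_{Y_T}$, and $\Succ_{X_T}$ via rank/select on $\MB_T$ and $\ME_T$, add the $\RmQ$ structure of Lemma~\ref{lemRMQ} on the implicit $\MUSlen_T$, and reuse Mieno et al.'s query procedure. You supply more detail than the paper does---explicit rank/select formulas, the observation that interior MUSs already contain $[s,t]$ so their cover length equals $\MUSlen_T[i]$, and the standard recursion-tree argument for the $O(\occ)$ bound---but none of this departs from the paper's line of reasoning.
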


Also, the data structure can be constructed
space-efficiently:

\begin{lemma}\label{lem:construct_intervalSUS_DS}
    Given $\MB_T$ and $\ME_T$,
    the data structure proposed in Theorem \ref{thm:intervalSUS_DS}
    can be constructed in $O(n)$ time using $2m + o(n)$ bits of total working space,
    which includes the space for this data structure.
\end{lemma}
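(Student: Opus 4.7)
The plan is to assemble the data structure of Theorem~\ref{thm:intervalSUS_DS} in two stages, taking care never to store the length array $\MUSlen_T$ explicitly (which would cost $\Theta(m \log n)$ bits and blow the budget). The idea is that, once $\MB_T$ and $\ME_T$ carry rank/select support, every entry of $\X_T$, $\Y_T$, and $\MUSlen_T$ can be evaluated on the fly, so we can treat $\MUSlen_T$ as an oracle rather than an array.

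First, I would invoke Lemma~\ref{lemRank} on $\MB_T$ and on $\ME_T$ to build their rank/select indexes in $O(n)$ time with $O(\log n)$ bits of additional working space; the two indexes together occupy $o(n)$ bits. With these in place, $\X_T[p] = \select_{\MB_T}(\mathtt{1}, p)$, $\Y_T[p] = \select_{\ME_T}(\mathtt{1}, p)$, and $\MUSlen_T[p] = \Y_T[p] - \X_T[p] + 1$ can all be evaluated in constant time, for any $p$ with $1 \le p \le m$; similarly, $\Pred_{\Y_T}$ and $\Succ_{\X_T}$ reduce to rank/select as discussed in Section~\ref{secPredecessor}. This gives all of the simulated arrays listed in the compact representation paragraph before Theorem~\ref{thm:intervalSUS_DS}.

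Second, I would build the $\RmQ$ data structure of Lemma~\ref{lemRMQ} on top of this virtual $\MUSlen_T$. Since Lemma~\ref{lemRMQ} only requires $O(1)$-time access to its input array, we feed it the oracle provided by $\MB_T$, $\ME_T$, and their rank/select indexes. Because $\MUSlen_T$ has length $m$, this step runs in $O(m)$ time, produces a structure of size $2m + o(m)$ bits, and uses $o(m)$ additional bits during construction. Summing the two stages, the running time is $O(n) + O(m) = O(n)$, and the additional working space is $o(n) + 2m + o(m) = 2m + o(n)$ bits, which fully accounts for the final data structure as well (since $\MB_T$ and $\ME_T$ are given as input).

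The main subtlety I expect to have to justify is that the linear-time, $o(n)$-bit construction of Lemma~\ref{lemRMQ} remains valid when its input array is accessed through an oracle instead of being kept in flat storage; this is immediate for the standard Cartesian-tree/balanced-parentheses construction, which only reads the array left-to-right and makes $O(1)$ probes per step, but it is the only non-routine point in the argument. Everything else is bookkeeping with $m \le n$ to collapse the $o(m)$ terms into $o(n)$.
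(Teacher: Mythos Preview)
Your proposal is correct and follows essentially the same approach as the paper's own proof: build rank/select on $\MB_T$ and $\ME_T$ via Lemma~\ref{lemRank}, then construct the $\RmQ$ data structure of Lemma~\ref{lemRMQ} on the virtual $\MUSlen_T$ accessed through those rank/select structures. The paper's proof is terser and does not explicitly flag the oracle-access subtlety you raise, but the content is the same.
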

\begin{proof}
    The data structure proposed in Theorem \ref{thm:intervalSUS_DS}
    consists of the two bit arrays $\MB_T$, $\ME_T$, and
    an $\RmQ$ data structure on $\MUSlen_T$, which is simulated by rank/select data structures on $\MB_T$ and $\ME_T$.
    Since $\MB_T$ and $\ME_T$ are already given,
    it is left to endow  $\MB_T$ and $\ME_T$ with rank/select data structures (using Lemma~\ref{lemRank}), and to compute the $\RmQ$ data structure on $\MUSlen_T$ (using Lemma~\ref{lemRMQ}).
\end{proof}
 \section{Compact Data Structure for the Point SUS Problem}\label{sec:pointSUS}
Before solving the point SUS problem, 
we borrow some additional notations from Tsuruta et al.~\cite{Tsuruta2014SUS} to deal with point SUS queries.
This is necessary since some of the MUSs never take part in finding a SUS such that
there is no meaning to compute and store them. 
Since we want to provide an output-sensitive algorithm answering a query in optimal time,
we only want to store MUSs that are candidates for being a SUS.

We say that the interval $[x,y] \in \MUS_T$ is a \emph{meaningful} MUS
if $T[x.. y]$ is a substring of (or equal to) a point SUS,
i.e., $\pcover([x,y], p) \in \SUS_T(p)$ for a position $p$.
Also, we say that the interval $[x,y] \in \MUS_T$ is a \emph{meaningless} MUS
if $[x,y]$ is not a meaningful MUS.
Let 
\begin{align*}
    \MMUS_T := \{[i,j] \in \MUS_T \mid& \mbox{ there exists a~} p \text{~with~} 1 \leq p \leq n   \\
                           & \mbox{ such that~} \pcover([i,j], p) \in \SUS_T(p)\}
\end{align*}
denote the set of all meaningful MUSs of $T$.

Let $\lmSUS_T^p$ denote the interval in $\SUS_T(p)$ with the leftmost starting position,
and let $\lmMUS_T^p$ denote the MUS contained in $\lmSUS_T^p$.
We say that $\lmSUS_T^p$ is the \emph{leftmost SUS} for $p$,
and $\lmMUS_T^p$ is the \emph{leftmost MUS} for $p$.
Similarly, we define the \emph{rightmost SUS} $\rmSUS_T^p$ and the \emph{rightmost MUS} $\rmMUS_T^p$ for $p$ by symmetry.
See Fig. \ref{fig:MUSandSUS} for an example for the leftmost/rightmost SUS and MUS.

Let $\length_T$ be an array of length $n$ such that
    $\length_T[i]$ is the length of a SUS\footnote{Although there can be multiple SUSs containing $i$, their lengths are all equal.} of $T$ containing~$i$
    for each position~$i$ with $1 \leq i \leq n$.
    Let $\MUSbegin_T$ be a bit array of length $n$ such that
    $\MUSbegin_T[i] = \mathsf{1}$ iff
    $i$ is the beginning position of a meaningful MUS of $T$.

From the definition of $\length_T$, we yield the following observation:

\begin{observation}\label{obs:Larray}
    For every position $p$ with $1 \leq p \leq n$ and every interval $[x, y] \in \SUS_T(p)$,
    $p-\length_T[p]+1 \leq x \leq p \leq y \leq p+\length_T[p]-1$.
\end{observation}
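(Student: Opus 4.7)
The plan is straightforward since the observation follows directly by unpacking the definitions, together with the footnote that all SUSs containing a given position $p$ share a common length.

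First, I would fix an arbitrary position $p$ with $1 \le p \le n$ and an arbitrary interval $[x,y] \in \SUS_T(p)$. By the definition of a point SUS at $p$, the substring $T[x..y]$ must contain the position $p$, so I immediately get the two middle inequalities $x \le p$ and $p \le y$.

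Next, I would use the length consistency. Since every SUS for $p$ has the same length, and $\length_T[p]$ is defined to be that common length, I get $y - x + 1 = \length_T[p]$. Combining $y - x + 1 = \length_T[p]$ with $y \ge p$ gives $x = y - \length_T[p] + 1 \ge p - \length_T[p] + 1$, which is the leftmost inequality. Symmetrically, combining $y - x + 1 = \length_T[p]$ with $x \le p$ yields $y = x + \length_T[p] - 1 \le p + \length_T[p] - 1$, which is the rightmost inequality. Chaining the four inequalities produces the claimed statement.

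There is no real obstacle here; the only subtlety worth spelling out is the appeal to the footnote that all point SUSs for the same $p$ have equal length, since without that fact the identification $y - x + 1 = \length_T[p]$ would not be justified. This fact itself is immediate from the definition of a SUS at $p$ as a shortest unique substring covering $p$: any two such substrings must have the common minimum length, so the array entry $\length_T[p]$ is well-defined and equal to $y - x + 1$ for every $[x,y] \in \SUS_T(p)$.
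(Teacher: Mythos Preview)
Your proposal is correct and matches the paper's approach: the paper simply states that the observation follows from the definition of $\length_T$ and gives no further argument, and your unpacking of the definitions (containment of $p$ plus the common length $y-x+1=\length_T[p]$) is exactly the intended justification.
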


Next, we define the following four functions related to $\length_T$ and $\MUSbegin_T$.
    For a position~$q$ with $1 \leq q \leq n$ let
    \begin{itemize}
        \item $\predonepos_{\MUSbegin_T}(q) := \max\{ i \mid i \leq q \mbox{~and~} \MUSbegin_T[i] = \mathsf{1}\}$,
        \item $\succonepos_{\MUSbegin_T}(q) := \min\{ i \mid i \geq q \mbox{~and~} \MUSbegin_T[i] = \mathsf{1}\}$,
        \item $\predneq_{\length_T}(q) := \max\{ i \mid i < q \mbox{~and~} \length_T[i] \neq \length_T[q]\}$, and
        \item $\succneq_{\length_T}(q) := \min\{ i \mid i > q \mbox{~and~} \length_T[i] \neq \length_T[q]\}$.
    \end{itemize}
For all four functions, we stipulate that $\min \{\} = \max \{\} = \nil$.
See Fig.~\ref{fig:length} for an example of the arrays and functions defined above.

\begin{figure}[tbp]
\centerline{\includegraphics[width=0.8\linewidth]{./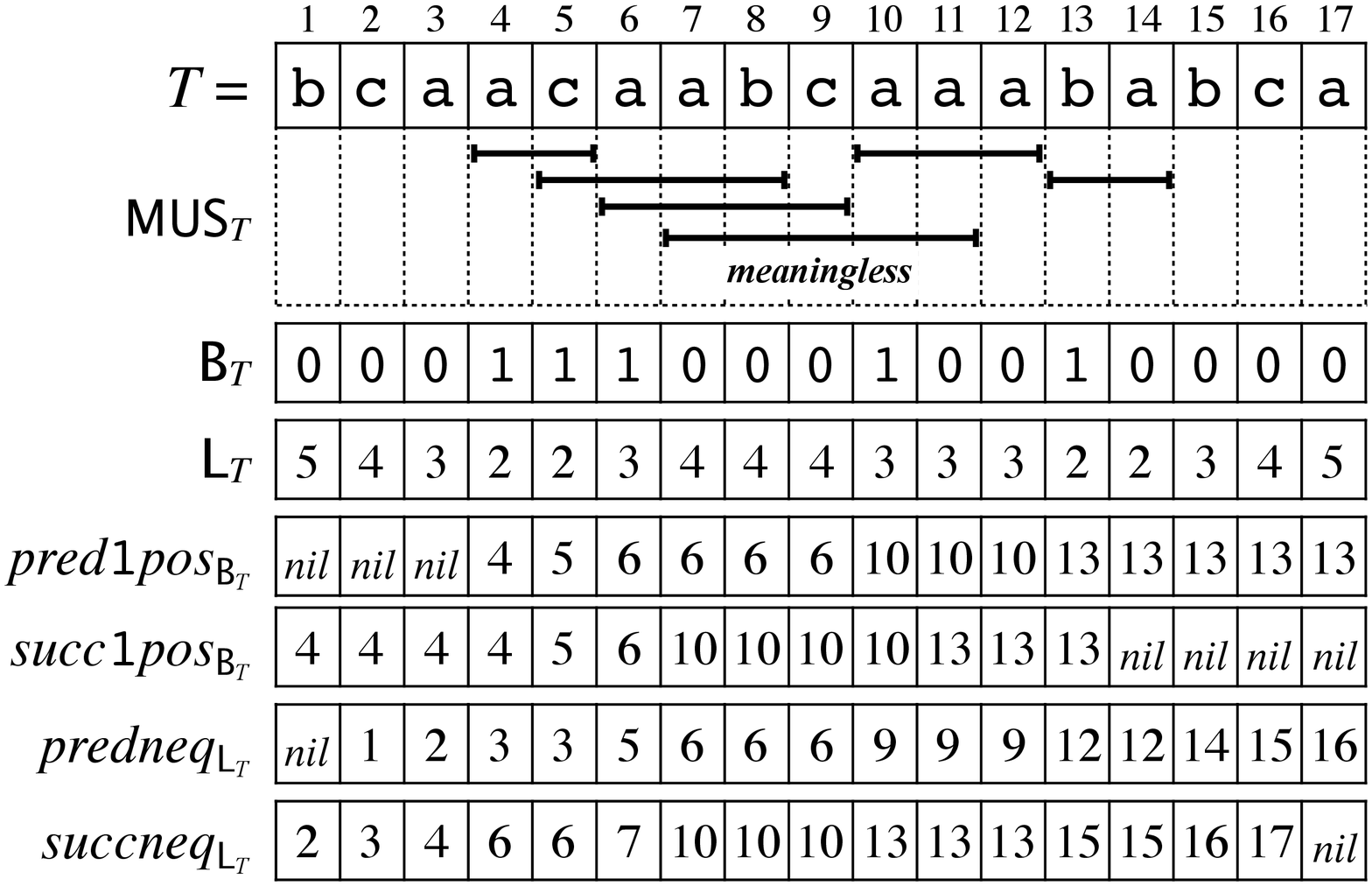}}
\caption{
    $\MUS_T, \MUSbegin_T$, $\length_T$, and the four functions defined in at the beginning of Section~\ref{sec:pointSUS}
    for the string $T = \mathtt{bcaacaabcaaababca}$.
    $\MUSbegin_T[7] = \mathtt{0}$
    because the MUS $T[7.. 11] = \mathtt{abcaa}$ is meaningless.
}\label{fig:length}
\end{figure}

\subsection{Finding SUSs with $\length$ and $\MUSbegin$}\label{subsec:pointSUS_Query}

Our idea is to answer point SUS queries with $\length_T$ and $\MUSbegin_T$.
For that, we first think about how to find the leftmost and rightmost SUS for a given query (Observation~\ref{obs:Larray} gives us the range in which to search).
Having this leftmost and the rightmost SUS, we can find all other SUSs with $\MUSbegin_T$ marking the
beginning positions of the meaningful MUSs that correspond to the SUSs we want to output.
Before that, we need some properties of $\length_T$ that help us to prove the following lemmas in this section:
Lemma~\ref{lem:Larray_diff} gives us a hint on the shape of $\length_T$, while
Lemma~\ref{lem:Lupdown} shows us how to find SUSs based on two consecutive values of $\length_T$ with a connection to MUSs.

\begin{lemma}\label{lem:Larray_diff}
    $|\length_T[p] - \length_T[p+1]| \leq 1$ for every position~$p$ with $1 \leq p \leq  n-1$.
\end{lemma}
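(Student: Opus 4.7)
\medskip
\noindent\textbf{Proof plan.} The plan is to prove the two inequalities $\length_T[p+1] \le \length_T[p]+1$ and $\length_T[p] \le \length_T[p+1]+1$ by a simple shifting argument: from a SUS at one position we construct a unique substring of length at most one larger that covers the neighbouring position, and then use the definition of a SUS to conclude.

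For the first inequality, I would pick any SUS $T[i..j] \in \SUS_T(p)$, so $i \le p \le j$ and $T[i..j]$ is unique of length $\length_T[p]$. The natural candidate is $T[i..j+1]$: since $i \le p < p+1$ we have $i \le p+1$, and $j+1 \ge p+1$, so this interval contains $p+1$; moreover any extension of a unique substring is unique, so $T[i..j+1]$ is a unique substring of length $\length_T[p]+1$ containing position $p+1$. Because $\length_T[p+1]$ is the length of a \emph{shortest} such substring, this gives $\length_T[p+1] \le \length_T[p]+1$. The only thing to check is the boundary case $j=n$, where the extension is not available; but then $T[i..n]$ itself is already unique, contains $p+1$ (as $p+1 \le n$ and $i \le p+1$), and has length $\length_T[p]$, so we even get $\length_T[p+1] \le \length_T[p]$.

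For the reverse inequality, I would pick any SUS $T[i'..j'] \in \SUS_T(p+1)$ and consider $T[i'-1..j']$. Since $i'-1 \le p$ (from $i' \le p+1$) and $p \le j'$ (from $p < p+1 \le j'$), this interval contains $p$, and extending a unique substring to the left preserves uniqueness, so we obtain a unique substring of length $\length_T[p+1]+1$ containing $p$. This yields $\length_T[p] \le \length_T[p+1]+1$. The boundary case here is $i'=1$, in which case $T[1..j']$ itself already contains $p$, is unique, and has length $\length_T[p+1]$, giving $\length_T[p] \le \length_T[p+1]$.

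\medskip
\noindent\textbf{Main obstacle.} There is no real technical difficulty; the whole argument is a two-line observation plus two small boundary checks. The only subtle point worth stating cleanly is that $\length_T[p]$ is always well-defined because $T$ itself is trivially a unique substring of $T$ containing every position, so at least one unique substring covers $p$ for each $1 \le p \le n$ and a shortest one exists. Once this is noted, both inequalities follow from the observation that every extension (and every left-shift) of a unique substring remains unique.
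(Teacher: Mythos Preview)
Your proof is correct and uses essentially the same shifting/extension idea as the paper: from a unique substring covering one position, extend by one character to obtain a unique substring covering the adjacent position, and then invoke minimality. The paper phrases the argument as a case analysis on the sign of $\length_T[p]-\length_T[p+1]$ and pins down the specific SUS $T[p-\ell+1..p]$ before extending, while you prove the two inequalities directly and handle the boundary cases $j=n$ and $i'=1$ explicitly; the underlying observation is identical.
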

\begin{proof}
    Let $\ell = \length_T[p]$ and $\ell' = \length_T[p+1]$.
    From the definition of $\length_T$,
    there exists a unique substring of length $\ell$ containing the position $p$.
    If $\ell < \ell'$, there is no unique substring of length $\ell$ containing $p+1$.
    Thus, $T[p-\ell+1..p]$ is unique, and consequently $T[p-\ell+1..p+1]$ is also unique.
    Hence, $\ell' = \ell+1$.
    Similarly, in the case of $\ell > \ell'$, it can be proven that $\ell' = \ell-1$.
\end{proof}

\begin{lemma}\label{lem:Lupdown}
    Let $p$ be a position with $1 \leq p \leq |T|-1$, and let $\ell := \length_T[p]$.
    If $\length_T[p+1] = \ell + 1$, then
    \begin{itemize}
        \item $T[p-\ell+1.. p  ] \in \SUS_T(p)$,
        \item $T[p-\ell+1.. p+1] \in \SUS_T(p+1)$, and
        \item $p-\ell+1$ is the starting position of a MUS of $T$.
    \end{itemize}
    If $\length[p+1] = \ell - 1$ then
    \begin{itemize}
	\item $T[p  .. p+\ell-1] \in \SUS_T(p)$,
	\item $T[p+1.. p+\ell-1] \in \SUS_T(p+1)$, and
	\item $p+\ell-1$ is the ending position of a MUS of $T$.
    \end{itemize}
\end{lemma}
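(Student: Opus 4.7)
I will prove the first case $\length_T[p+1] = \ell+1$ in detail; the second case follows by the symmetric argument in which the roles of left and right are swapped. First, I invoke the reasoning already given inside the proof of Lemma~\ref{lem:Larray_diff}: by definition of $\length_T[p]=\ell$ there is a unique substring of length $\ell$ containing $p$, and because $\length_T[p+1]>\ell$ no unique substring of length $\ell$ can contain $p+1$; the unique length-$\ell$ substring containing $p$ must therefore be exactly $T[p-\ell+1..p]$. Extending it by the right neighbor preserves uniqueness, so $T[p-\ell+1..p+1]$ is also unique. The two SUS memberships follow immediately from the definition of $\length_T$: $T[p-\ell+1..p]$ is unique, contains $p$, and has the matching length $\length_T[p]=\ell$, so $T[p-\ell+1..p]\in\SUS_T(p)$; likewise $T[p-\ell+1..p+1]\in\SUS_T(p+1)$ since its length $\ell+1$ matches $\length_T[p+1]$.

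The main difficulty is the third bullet, which asserts that \emph{some} MUS starts at $p-\ell+1$ — not necessarily the substring $T[p-\ell+1..p]$, because a strictly shorter prefix $T[p-\ell+1..j']$ may already be unique. To handle this, I let $k$ be the smallest index $j'$ for which $T[p-\ell+1..j']$ is unique; the preceding paragraph gives $k\le p$. Set $u:=T[p-\ell+1..k]$. To show $u$ is a MUS I partition its proper substrings into those sharing the left endpoint $p-\ell+1$ and those starting strictly to its right. Substrings in the first group are contained in $T[p-\ell+1..k-1]$, which is repeating by the minimality of $k$, so they are repeating (a substring of a repeating string is repeating). For the second group, the hypothesis $\length_T[p+1]=\ell+1$ is essential: $T[p-\ell+2..p+1]$ has length $\ell$ and contains $p+1$, so it cannot be unique, hence it is repeating; since $k\le p<p+1$, $T[p-\ell+2..k]$ is a substring of it and is therefore repeating as well. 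Consequently every proper substring of $u$ is repeating, making $u$ a MUS starting at position $p-\ell+1$.

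For Case 2, the argument is mirrored: by the symmetric direction of Lemma~\ref{lem:Larray_diff}, $T[p+1..p+\ell-1]$ is the unique length-$(\ell-1)$ substring containing $p+1$, and $T[p..p+\ell-1]$ is its left-extension, yielding the two SUS claims exactly as above. For the MUS ending at $p+\ell-1$, I let $k'$ be the largest $i'$ with $T[i'..p+\ell-1]$ unique (so $k'\ge p$); the maximality of $k'$ gives that $T[k'+1..p+\ell-1]$ is repeating, and now the hypothesis $\length_T[p]=\ell$ plays the symmetric role, ensuring that $T[p..p+\ell-2]$ — and hence its substring $T[k'..p+\ell-2]$ — is repeating. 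I expect the trickiest step in both cases to be the MUS claim, since one must combine the minimality/maximality of $k$/$k'$ with the value of $\length_T$ at the \emph{opposite} endpoint to control both families of proper substrings; using only one of the two $\length_T$ hypotheses leaves $T[p-\ell+2..k]$ (respectively $T[k'..p+\ell-2]$) unaccounted for.
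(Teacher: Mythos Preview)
Your proof is correct, and for the first two bullets in each case it coincides with the paper's argument (both invoke the reasoning from the proof of Lemma~\ref{lem:Larray_diff}). The difference is in the third bullet. The paper does not construct the MUS explicitly: it appeals to Lemma~\ref{lemOneMUS} (every point SUS contains a MUS) to obtain some MUS $[b,e]\subset[p-\ell+1,p]$, and then argues by contradiction that $b>p-\ell+1$ would produce a unique substring $T[b..p]$ of length $<\ell$ containing $p$, violating $\length_T[p]=\ell$. Your route is instead constructive and self-contained: you take the shortest unique prefix $T[p-\ell+1..k]$ and verify directly that both maximal proper substrings $T[p-\ell+1..k-1]$ and $T[p-\ell+2..k]$ are repeating, the first by minimality of $k$ and the second by embedding it into $T[p-\ell+2..p+1]$, which is non-unique because $\length_T[p+1]=\ell+1$. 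Your argument avoids the external citation behind Lemma~\ref{lemOneMUS} at the cost of a couple of extra lines; the paper's version is terser but relies on that imported fact. Both are perfectly valid.
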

\begin{proof}
    First, we consider the case that $\length_T[p+1] = \ell + 1$.
    From the proof of Lemma~\ref{lem:Larray_diff},
    $T[p-\ell+1..p]$ and $T[p-\ell+1..p+1]$ are unique substrings in $T$.
    Thus, $T[p-\ell+1.. p] \in \SUS_T(p)$ and $T[p-\ell+1.. p+1] \in \SUS_T(p+1)$.
    Since every point SUS contains exactly one MUS (cf.\ Lemma~\ref{lemOneMUS}), 
    there exists a MUS $[b, e] \subset [p-\ell+1, p]$.
    Assume that $b > p-\ell+1$, then $T[b..p]$ is the shortest unique substring 
    among all substrings containing the text position~$p$.
    Its length is $p-b+1 < \ell$.
    This contradicts that $T[p-\ell+1.. p] \in SUS_T(p)$, and therefore $b = p-\ell+1$ must hold.
    The remaining case $\length_T[p+1] = \ell - 1$ can be proven analogously by symmetry.
\end{proof}

In the following two lemmas (Lemmas~\ref{lem:lmSUS} and~\ref{lem:rmSUS}), we focus on finding the leftmost SUS and the rightmost SUS for a given query point.
That is because the leftmost SUS and the rightmost SUS give us an interval containing the starting positions of the remaining SUSs we want to report\footnote{The actual reporting of those SUSs is done in Lemma~\ref{lem:leftmost_rightmost}}.
 
\begin{figure}[ht]
    \centerline{\includegraphics[width=120mm]{./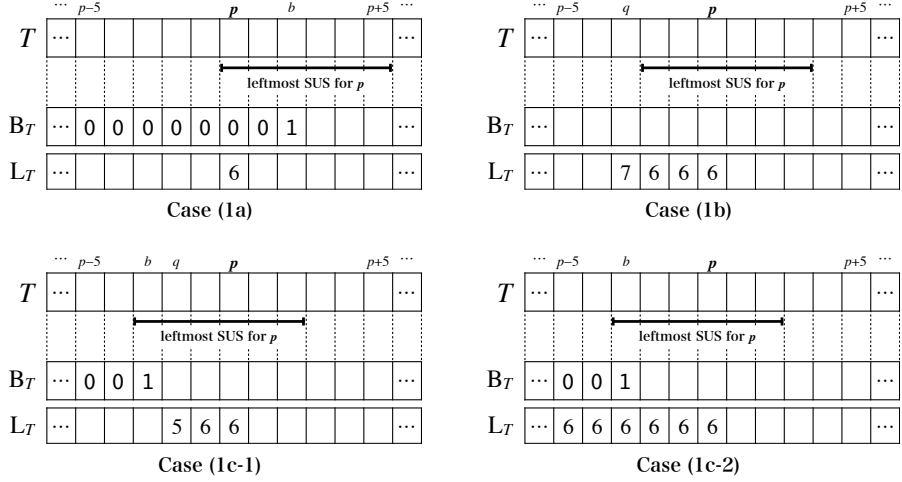}}
    \caption{Example of the proof of Lemma~\ref{lem:lmSUS}
        with $\length_T[p] = \ell = 6$.
        The example (as well as all later examples in this section) still works when replacing the number $\ell = 6$ with another number as long as the
        relative differences to the other entries in $\length_T$ and the search range in $T$ is kept.
    }\label{fig:lmSUS}
\end{figure}

\begin{lemma}\label{lem:lmSUS}
    Let $p$ be a position with $1 \leq p \leq n$, and let
$\ell = \length_T[p]$,
    $b = \succonepos_{\MUSbegin_T}(\max\{1,p-\ell+1\})$,
    and
    $q = \predneq_{\length_T}(p)$.
    Then, $b \leq \min\{p+\ell-1, n\}$ and
    \begin{subnumcases}{\lmSUS_T^p=}
        {}[p,p+\ell-1] & if $b \geq p$,\\
        {}[q+1,q+\ell] & if $b < p$, $q \geq p-\ell+1$, and $\length_T[q] > \ell$,\\
        {}[b,b+\ell-1] & otherwise.
    \end{subnumcases}
\end{lemma}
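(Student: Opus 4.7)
My plan rests on a structural observation I would establish first: every SUS $[x,y] \in \SUS_T(p)$ equals $\pcover([u,v], p)$ for its unique enclosed MUS $[u,v]$, which is necessarily meaningful. This follows from Lemma~\ref{lemOneMUS} together with length-minimality of SUSs (otherwise $\pcover([u,v], p)$ would be a strictly shorter unique substring covering $p$). Consequently, whenever $x < p$ we get $u = x$, and Observation~\ref{obs:Larray} places the enclosed MUS of any SUS for $p$ inside $[\max\{1, p-\ell+1\}, \min\{n, p+\ell-1\}]$, which makes that MUS a valid candidate for $b$ and yields the bound $b \leq \min\{p+\ell-1, n\}$ at once.

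An immediate corollary I would reuse throughout is that no SUS for $p$ starts strictly to the left of $b$: such a SUS would, by the structural observation, force a meaningful MUS to start at some $u' < b$ with $u' \geq \max\{1, p-\ell+1\}$, contradicting the definition of $b$. Case~1 ($b \geq p$) then follows at once: the corollary forbids any SUS to start before $p$, while Observation~\ref{obs:Larray} forbids any SUS to start after $p$, so $\lmSUS_T^p = [p, p+\ell-1]$. For Case~2 ($b < p$, $q \geq p-\ell+1$, $\length_T[q] > \ell$), I would combine Lemma~\ref{lem:Larray_diff} with $\length_T[q+1] = \ell$ (which holds because $q$ is the last position below $p$ at which $\length_T$ deviates from $\ell$) to deduce $\length_T[q] = \ell+1$. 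Lemma~\ref{lem:Lupdown} applied at $q$ then supplies $T[q+1..q+\ell] \in \SUS_T(q+1)$; since $q+1 \leq p \leq q+\ell$, this length-$\ell$ unique substring also covers $p$ and so lies in $\SUS_T(p)$ too. Any SUS for $p$ starting at some $x \leq q$ would cover $q$ as well, forcing $\length_T[q] \leq \ell$ and contradicting $\length_T[q] = \ell+1$; hence $\lmSUS_T^p = [q+1, q+\ell]$.

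Case~3 (otherwise) is the main obstacle. The corollary has already eliminated SUSs starting before $b$, so the crux is to verify $[b, b+\ell-1] \in \SUS_T(p)$. I plan to split Case~3 into the sub-cases $q = \nil$, $q < p-\ell+1$, and ($q \geq p-\ell+1$ with $\length_T[q] = \ell-1$, the only alternative by Lemma~\ref{lem:Larray_diff}). In each sub-case I would argue that the unique meaningful MUS $[b, v_b]$ anchored at $b$ satisfies $v_b \leq b+\ell-1$. If not, $T[b..b+\ell-1]$ would be a proper prefix of a MUS and thus repeating; the SUS for $b$, whose length is $\ell$ in each sub-case (which one checks by pinning down $\length_T[b] = \ell$ using Lemmas~\ref{lem:Larray_diff} and~\ref{lem:Lupdown} together with the case hypothesis), would then have to start at some $x_b < b$, and its contained MUS would be a meaningful MUS starting in $[\max\{1, p-\ell+1\}, b-1]$, contradicting the minimality of $b$. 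Once $v_b \leq b+\ell-1$ is in hand, $T[b..b+\ell-1]$ contains the unique substring $T[b..v_b]$ and is therefore itself a unique substring of length $\ell$ covering $p$; by the corollary it is the leftmost SUS for $p$. The most delicate step is the sub-case analysis that locates the MUS correctly — especially invoking Lemma~\ref{lem:Lupdown} at the change point $q$ when $\length_T[q] = \ell-1$ — so that $x_b$ indeed lands in $[\max\{1, p-\ell+1\}, b-1]$.
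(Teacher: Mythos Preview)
Your structural observation, the corollary that no SUS for~$p$ starts left of~$b$, and your treatment of Cases~1 and~2 are correct and essentially match the paper. The difficulty is Case~3, where your contradiction route diverges from the paper's and does not close.

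First, the subsidiary claim $\length_T[b] = \ell$ is not valid in every sub-case: in the sub-case $q \geq p-\ell+1$ with $\length_T[q] = \ell-1$, the interval $[q-\ell+2,q]$ is unique of length~$\ell-1$ (Lemma~\ref{lem:Lupdown}), and since $[p-\ell+1,q] \subset [q-\ell+2,q]$ one only gets $\length_T[b] \leq \ell-1$ whenever $b \leq q$. More seriously, even granting the weaker (and true) bound $\length_T[b] \leq \ell$, your argument does not force $x_b \geq \max\{1, p-\ell+1\}$. The SUS for~$b$ that you locate starts at $x_b < b$ with $x_b \geq b - \length_T[b] + 1 \geq b - \ell + 1$; since $b$ may be as small as $p-\ell+1$, this yields only $x_b \geq p-2\ell+2$, and nothing in your outline pushes $x_b$ back up to $p-\ell+1$. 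When $b = p-\ell+1$ the target interval $[\max\{1,p-\ell+1\}, b-1]$ is empty, so no ``meaningful MUS to the left of~$b$'' argument can possibly succeed there.

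The paper obtains the contradiction differently. After establishing $\length_T[i] \leq \ell$ for all $i \in [p-\ell+1, p]$ (which your sub-case split also gives), it brings in the \emph{actual} leftmost MUS $[b',e'] = \lmMUS_T^p$: from your corollary and non-nesting of MUSs one gets $b' > b$ and $e' > v_b$, and the leftmost SUS $[e'-\ell+1, e']$ then witnesses $\length_T[i] \leq \ell$ for all $i \in [p, e']$. Combining the two ranges, $\length_T[i] \leq \ell$ on all of $[b, v_b] \subset [p-\ell+1, e']$, which contradicts the meaningfulness of $[b, v_b]$ (a meaningful MUS of length $\ell' > \ell$ must satisfy $\length_T[i] = \ell'$ for some $i$ in its span). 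The missing idea in your plan is to look \emph{right}, at $\lmMUS_T^p$, rather than trying to manufacture a meaningful MUS to the left of~$b$.
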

\begin{proof}
If $\ell = 1$, it is clear that the interval $[p, p]$ of length 1 is a MUS of $T$,
    thus $b = p$ and $\lmSUS_T^p = [p, p]$.
    For the rest of the proof, we focus on the case that $\ell \geq 2$.
Since $\length_T[p] = \ell$,
    there exists a unique substring of length $\ell$ containing the position $p$,
    and there exists at least one MUS that is a subinterval of $[p-\ell+1, p+\ell-1]$.
    Thus, $b \leq \min\{p+\ell-1, n\}$.
    See Fig.~\ref{fig:lmSUS} for an illustration of each of the above cases we consider in the following:
\begin{enumerate}
\item[(1a)]
            Assume that there exists a unique substring $T[p'.. p'+\ell-1]$
            containing the position $p$ with $p' < p$.
            Since $b \geq p > p'$, $T[p'+1.. p'+\ell-1]$ is also unique and contains position $p$.
            It contradicts $\length_T[p] = \ell$; therefore, $\lmSUS_T^p = [p, p+\ell-1]$.
\item[(1b)]
            From the definition of $q$ and Lemma~\ref{lem:Larray_diff},
            $\length_T[q] = \ell+1$ and $\length_T[q+1] = \ell$.
            From Lemma~\ref{lem:Lupdown}, $[q+1, q+\ell]$ is unique.
            Also, $[q+1, q+\ell] \in \SUS_T(p)$ because $p \in [q+1, q+\ell]$.
            Since $\length_T[q] = \ell+1$, there is no unique substring that
            contains the position $q$ and is shorter than $\ell+1$.
            Therefore, $\lmSUS_T^p = [q+1, q+\ell]$.
\item[(1c)]
            We divide this case into two subcases:
            \begin{itemize}
                \item[(1c-1)] $b < p$ and $q \geq p-\ell+1$ and $\length_T[q] < \ell$, or
                \item[(1c-2)] $b < p$ and $q <    p-\ell+1$.
            \end{itemize}
In Subcase~(1c-1),
            from the definition of $q$ and Lemma~\ref{lem:Larray_diff},
            $\length_T[q] = \ell-1$ and $\length_T[i] = \ell$ for all~$i \in [q+1, p]$.
            From Lemma~\ref{lem:Lupdown}, the interval $[q-\ell+2, q]$ of length $\ell-1$ is unique.
            Since $[p-\ell+1, q] \subset [q-\ell+2, q]$,
            $\length_T[i] \leq \ell-1$  for all~$i \in [p-\ell+1, q]$.
In Subcase~(1c-2),
            it is clear that $\length_T[i] = \ell$ for all $i \in [p-\ell+1, p]$.
Therefore, $\length_T[i] \leq \ell$ for all $i \in [p-\ell+1, p]$
            in both subcases.

Let $e$ be the ending position of the meaningful MUS $[b,e]$ starting at the position $b$,
            and $\ell' = e-b+1$ be the length of this MUS.
We assume $\ell' > \ell$ for the sake of contradiction
            (and thus $[b,e]$ cannot be $\lmMUS_T^p$ whose length is at most $\ell$).
            Since $b \geq p-\ell+1$ and $\ell' > \ell$, $e > p$ must hold.
            Let $[b', e'] = \lmMUS_T^p$.
Since (a) there is no interval $[x, y] \in \SUS_T(p)$ such that $x < \min\{b, p\}$,
            and (b) MUSs cannot be nested, it follows that $b' > b$ and $e' > e$.
            Thus, $\lmSUS_T^p = \pcover([b',e'], p) = [e'-\ell+1, e']$
            and $\length_T[i] \leq \ell$ for all $p \leq i \leq e'$.
            Since $[b, e] \subset [p-\ell+1, e']$,
            $\length_T[i] \leq \ell$ for all $b \leq i \leq e$.
            This contradicts that the MUS $[b, e]$ of length $\ell' > \ell$ is a meaningful MUS.
            Therefore, $\ell' \leq \ell$ and $\lmSUS_T^p = \pcover([b,e],p) = [b, b+\ell-1]$.
    \end{enumerate}
\end{proof}
From Lemma~\ref{lem:lmSUS} we yield the following corollary:

\begin{corollary}\label{col:lmSUS}
    If we can compute $\length_T[i]$,
    $\predneq_{\length_T}(i)$ and $\succonepos_{\MUSbegin_T}(i)$ in constant time
    for each~$i$ with $1 \leq i \leq n$,
    we can compute $\lmSUS_T^p$ in constant time for each position~$p$ with $1 \leq p \leq n$.
\end{corollary}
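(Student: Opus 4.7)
The plan is to read off the result directly from the case analysis in Lemma~\ref{lem:lmSUS}. Given the query position~$p$, we simply compute the three quantities $\ell = \length_T[p]$, $b = \succonepos_{\MUSbegin_T}(\max\{1, p-\ell+1\})$, and $q = \predneq_{\length_T}(p)$. Each of these is either an access to $\length_T$, an evaluation of $\succonepos_{\MUSbegin_T}$, or an evaluation of $\predneq_{\length_T}$, all of which take constant time by the hypothesis of the corollary. The computation of $\max\{1, p-\ell+1\}$ is a single arithmetic operation and hence also constant time.

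Once $\ell$, $b$, and $q$ are available, we branch according to the three cases of Lemma~\ref{lem:lmSUS}. Determining the correct branch requires only the comparisons $b \geq p$, $q \geq p - \ell + 1$, and (in the second case) a single access $\length_T[q]$ to test whether $\length_T[q] > \ell$; again, each of these is a constant-time operation under the assumption on $\length_T$. The returned interval in each case—$[p,p+\ell-1]$, $[q+1,q+\ell]$, or $[b,b+\ell-1]$—is obtained by constant-time arithmetic on the already-computed values.

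The only minor subtlety is the handling of boundary values of $q$ and $b$, namely when $\predneq_{\length_T}(p)$ or $\succonepos_{\MUSbegin_T}(\cdot)$ returns $\nil$. If $q = \nil$, the second case trivially fails (its predicate $q \geq p-\ell+1$ is false), so we fall through to the third case; and Lemma~\ref{lem:lmSUS} already guarantees $b \leq \min\{p+\ell-1,n\}$, so $b$ is always a valid position when the third case is reached. Thus the dispatch is well-defined and costs $O(1)$ time in total, which establishes the corollary.
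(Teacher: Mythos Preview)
Your proposal is correct and matches the paper's approach exactly: the paper states the corollary as an immediate consequence of Lemma~\ref{lem:lmSUS} without giving any further proof, and your write-up simply spells out the obvious constant-time evaluation of $\ell$, $b$, $q$ followed by the case dispatch. The extra remark about handling $\nil$ values is a reasonable clarification that the paper leaves implicit.
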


\begin{figure}[ht]
    \centerline{\includegraphics[width=120mm]{./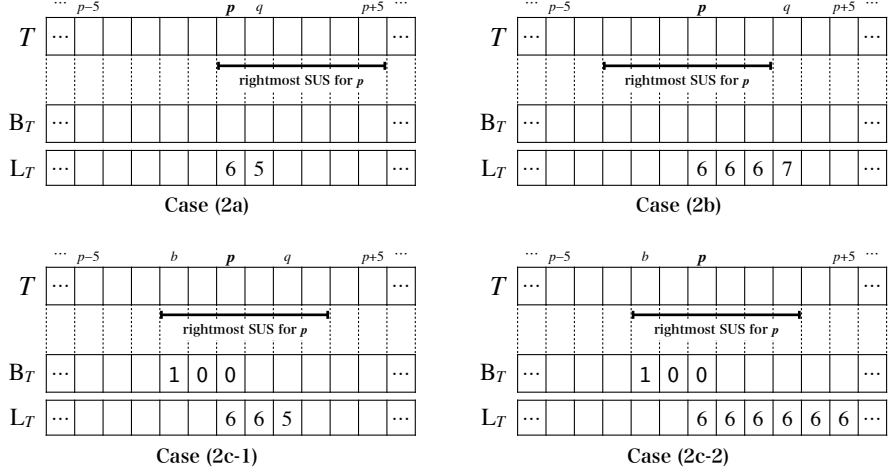}}
    \caption{Example of the proof of Lemma~\ref{lem:rmSUS}
        with $\length_T[p] = \ell = 6$.
    }\label{fig:rmSUS}
\end{figure}

\begin{lemma}\label{lem:rmSUS}
    Let $p$ be a position with $1 \leq p \leq n$, and
    let $\ell = \length_T[p]$, $q = \succneq_{\length_T}(p)$,
    and $b = \predonepos_{\MUSbegin_T}(p)$.
    Then,
    \begin{subnumcases}{\rmSUS_T^p=}
        {}[p,p+\ell-1] & if $q =    p + 1$     and $\length_T[q] < \ell$, \\
        {}[q-\ell,q-1] & if $q \leq p + \ell - 1$ and $\length_T[q] > \ell$, \\
        {}[b,b+\ell-1] & otherwise.
    \end{subnumcases}
\end{lemma}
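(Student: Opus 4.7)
The plan is to mirror the proof of Lemma~\ref{lem:lmSUS} by interchanging the roles of left and right. Because every $[x,y]\in \SUS_T(p)$ satisfies $x\leq p\leq y$ and $y-x+1=\ell$, the candidate starting positions of SUSs of~$p$ lie in $[p-\ell+1, p]$, so $\rmSUS_T^p$ is simply the SUS of~$p$ with the largest starting position.

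For Case (2a), I would combine $q=p+1$ with $\length_T[q]<\ell$ and Lemma~\ref{lem:Larray_diff} to deduce $\length_T[p+1]=\ell-1$, then invoke the descending half of Lemma~\ref{lem:Lupdown} to obtain $T[p..p+\ell-1]\in \SUS_T(p)$; since $p$ is the largest feasible starting position, this interval is $\rmSUS_T^p$. For Case (2b), $\length_T[q]=\ell+1$ by Lemma~\ref{lem:Larray_diff}, and applying the ascending half of Lemma~\ref{lem:Lupdown} at position $q-1$ shows that $[q-\ell, q-1]$ is unique of length~$\ell$; it contains~$p$ because $p<q\leq p+\ell-1$, so it lies in $\SUS_T(p)$. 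Any SUS with a strictly larger starting position would cover~$q$, producing a unique substring of length~$\ell$ containing~$q$ and contradicting $\length_T[q]=\ell+1$.

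The main obstacle is Case (2c), which I plan to handle in the style of subcase~(1c) of Lemma~\ref{lem:lmSUS}. First, I split the ``otherwise'' branch into the mirror subcases $\length_T[q]=\ell-1$ with $q\geq p+2$, $\length_T[q]=\ell+1$ with $q\geq p+\ell$, and $q=\nil$; in the first subcase Lemma~\ref{lem:Lupdown} applied at~$q-1$ furnishes the unique length-$(\ell-1)$ interval $[q, q+\ell-2]$, which together with $\length_T[i]=\ell$ for $i\in[p,q-1]$ yields $\length_T[i]\leq \ell$ throughout $[p, p+\ell-1]$, whereas in the other two subcases $\length_T[i]=\ell$ on $[p, p+\ell-1]$ follows directly from the definition of~$q$. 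Next I verify $b\geq p-\ell+1$: since $\length_T[p]=\ell$ some SUS of~$p$ exists and contains a meaningful MUS, and because we are outside Case~(2a) no meaningful MUS starting strictly to the right of~$p$ can produce a SUS of~$p$ (such a configuration would force $\length_T[p+1]=\ell-1$ via Lemma~\ref{lem:Lupdown}), so the MUS starts in $[p-\ell+1, p]$, whence $b\geq p-\ell+1$.

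The final step is to show that the meaningful MUS $[b,e]$ starting at~$b$ has length $\ell':=e-b+1\leq \ell$. I argue by contradiction, assuming $\ell'>\ell$. Let $[b'',e''] := \rmMUS_T^p$; by the maximality of~$b$ we have $b''\leq b$, and $b''\neq b$ since the unique MUS starting at~$b$ is $[b,e]$, whose length~$\ell'$ exceeds the length of any MUS inside a length-$\ell$ SUS. Hence $b''<b$, and the non-nesting of MUSs (Lemma~\ref{lem:size_of_mus}) then forces $e''<e$. Since $b''\geq p-\ell+1$ (because $[b'',e'']\subseteq \rmSUS_T^p$) and $b\leq p$, we have $b-b''\leq \ell-1$ and therefore $b\in \rmSUS_T^p$, giving $\length_T[b]\leq \ell$. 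On the other hand $\length_T[b]=\ell'$: any unique substring of length strictly less than~$\ell'$ containing~$b$ would be a proper substring of the MUS~$[b,e]$, violating the minimality defining a MUS. This contradicts $\ell'>\ell$, so $\ell'\leq \ell$, and hence $[b, b+\ell-1]$ contains the MUS~$[b,e]$, is unique of length~$\ell$, and contains~$p$; that is, $[b, b+\ell-1]\in \SUS_T(p)$. A short case analysis using the maximality of~$b$ and the exclusion of Case~(2a) confirms that no SUS of~$p$ starts strictly after~$b$, yielding $\rmSUS_T^p = [b, b+\ell-1]$.
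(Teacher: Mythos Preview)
Your treatment of Cases~(2a) and~(2b) is fine, and your setup for Case~(2c) --- the three-way split of ``otherwise'', the bound $\length_T[i]\le\ell$ on $[p,p+\ell-1]$, and the verification that $b\in[p-\ell+1,p]$ --- matches the paper. The gap is in the contradiction step at the end.

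Your claim that $\length_T[b]=\ell'$ rests on the assertion that \emph{any unique substring of length less than $\ell'$ containing $b$ would be a proper substring of the MUS $[b,e]$}. This is false: a unique interval $[x,y]$ with $x<b\le y$ contains~$b$ without being contained in $[b,e]$. In fact you have already produced such an interval yourself --- you argued that $b\in\rmSUS_T^p$, and $\rmSUS_T^p$ is a unique interval of length $\ell<\ell'$ whose left endpoint is at most $b''<b$. So the very witness you used to establish $\length_T[b]\le\ell$ refutes the premise you invoke for $\length_T[b]=\ell'$; the contradiction argument therefore does not close.

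Looking only at the single position~$b$ is not enough. What the paper does (and what is needed) is to show that $\length_T[i]\le\ell$ for \emph{every} $i\in[b,e]$, and then use that a meaningful MUS $[b,e]$ must satisfy $\length_T[i]=\ell'$ for some $i\in[b,e]$ (equivalently, $\max_{i\in[b,e]}\length_T[i]=e-b+1$; this is the characterization used implicitly here and stated in the proof of Lemma~\ref{lem:Barray_compact}). Extending the bound from $b$ to all of $[b,e]$ requires a further case distinction: from $b''<b$ and $e''-b''+1\le\ell$ together with the already-established $\length_T[i]\le\ell$ on $[p,p+\ell-1]$ one gets $\length_T[i]\le\ell$ on $[b'',p+\ell-1]\supseteq[b,p+\ell-1]$, and when $e\le p+\ell-1$ this covers $[b,e]$ directly; when $e>p+\ell-1$ one needs an extra step, exploiting a short unique interval through $p+\ell-1$ to push the bound past $e$. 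Your proposal omits this entirely.
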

\begin{proof}
If $\ell = 1$, it is clear that the interval $[p, p]$ of length 1 is a MUS of $T$,
    thus $b = p$ and $\rmSUS_T^p = [p, p]$. We consider the condition of $\ell \geq 2$.
    See Fig.~\ref{fig:rmSUS} for an illustration of each of the above cases we consider in the following:
    \begin{enumerate}
\item[(2a)]
            From Lemma~\ref{lem:Lupdown}, $[p, p+\ell-1]$ is a  SUS for $p$, which is by definition the rightmost one.
\item[(2b)]
            In this case, $\length_T[q] = \ell+1$ and $\length_T[q-1] = \ell$.
            From Lemma~\ref{lem:Lupdown}, $[q-\ell, q-1]$ is unique.
            Since $p \in [q-\ell, q+1]$, $[q-\ell, q-1]$ is a SUS for $p$.
            Additionally, there is no unique interval $[x, y] \in \SUS_T(p)$
            such that $y \geq q$ because $\length_T[q] = \ell+1$.
            Thus, $\rmSUS_T^p = [q-\ell, q-1]$.
\item[(2c)]
            We divide this case into two subcases:
            \begin{itemize}
                \item[(2c-1)] $p+1 < q \leq p+\ell-1$ and $\length_T[q] < \ell$, or
                \item[(2c-2)]       $q >    p+\ell-1$.
            \end{itemize}
In Subcase (2c-1),
            $\length_T[p+1] = \ell$ and $\length_T[q] = \ell-1$ and $\length_T[i] = \ell$
            for all $p+2\leq i \leq q-1$.
            From Lemma~\ref{lem:Lupdown}, $[q, q+\ell-2]$ (of length $\ell-1$) is unique.
            Since $[q, p+\ell-1] \subset [q, q+\ell-2]$, $\length_T[i] \leq \ell-1$
            for all $q \leq i \leq p+\ell-1$.
In Subcase (2c-2),
            from the definition of $q$, $\length_T[i] = \ell$ for all $p \leq i \leq p+\ell-1$.
Therefore, $\length_T[p+1] = \ell$ and $\length_T[i] \leq \ell$
            for all integers~$i$ with $p+2 \leq i \leq p+\ell-1$
            in both subcases.

For the sake of contradiction, assume that there is a MUS $[b', e']$
            such that $b' > p$ and $\pcover([b', e'], p) = [p, e'] \in \SUS_T(p)$.
            Since $\length_T[p] = \ell$, $[p, e']$ is a unique substring of length $\ell$.
            Hence, $\pcover([b',e'], p+1) = [p+1, e']$ is a unique substring of length $\ell-1$.
            It contradicts $\length_T[p+1] = \ell$;
            therefore, the beginning position of the rightmost MUS for $p$ is at most $p$.
Next, we show that the MUS starting at $b$ is the rightmost meaningful MUS for $p$.
            Let $e$ be its ending position, and $\ell' = e-b+1$ be its length.
We assume that $\ell' > \ell$ for the sake of contradiction (and thus, $[b,e]$ is not $\rmMUS_T^p$ whose length is at most $\ell$).
            Since $\length_T[p] = \ell$, $b \geq p-\ell+1$ and $e > p$.
            Let $[b'', e''] = \rmMUS_T^p$.
            Since MUSs cannot be nested, $b'' < b$.
            Since $e''-b''+1 \leq \ell$,
            $\length_T[i] \leq \ell$ for all $i$ with $b'' \leq i \leq p+\ell-1$.
We consider two cases to obtain a contradiction:
\begin{itemize}
                \item If $e \leq p+\ell-1$ then it is clear that $\length_T[i] \leq \ell$ for all $i$ with $b \leq i \leq e$.
                    This contradicts that the MUS $[b, e]$ of length $\ell'$ is a meaningful MUS.
                \item If $e > p+\ell-1$, it is clear that $|\pcover([b,e], p+\ell-1)| = |[b,e]| = \ell'$.
                    Since $\length_T[p+\ell-1] \leq \ell$,
                    there exists a unique substring $[s, t]$ such that
                    $s \leq p+\ell-1 \leq t$ and $t-s+1 \leq \ell$.
                    Hence, $\length_T[i] \leq \ell$ for all $i$ with $s \leq i \leq t$.
                    Since $[b, e]$ is a MUS and $p \leq  s$, $b < s < e < t$.
                    Consequently, $\length_T[i] \leq \ell$ for all $i$ with $b \leq i \leq e$ and
                    this contradicts that the MUS $[b, e]$ of length $\ell'$ is a meaningful MUS.
            \end{itemize}
Therefore, $\ell' \leq \ell$ and $\rmSUS_T^p = \pcover([b,e],p) = [b, b+\ell-1]$.
    \end{enumerate}
\end{proof}

\begin{corollary}\label{col:rmSUS}
    If we can compute $\length_T[i]$,
    $\succneq_{\length_T}(i)$ and $\predonepos_{\MUSbegin_T}(i)$ in constant time
    for each~$i$ with $1 \leq i \leq n$,
    we can compute $\rmSUS_T^p$ in constant time for each position~$p$ with $1 \leq p \leq n$.
\end{corollary}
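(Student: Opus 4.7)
The plan is to observe that Corollary~\ref{col:rmSUS} is essentially a direct algorithmic reading of Lemma~\ref{lem:rmSUS}: the lemma already provides a closed-form case analysis that determines $\rmSUS_T^p$ from the three quantities $\ell := \length_T[p]$, $q := \succneq_{\length_T}(p)$, and $b := \predonepos_{\MUSbegin_T}(p)$. Hence my proof will be short and simply describe the querying procedure together with a constant-time case discrimination.

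First, I would query $\ell = \length_T[p]$ in $O(1)$ time using the assumed access to $\length_T$. If $\ell = 1$, then $[p,p]$ is a MUS and we directly return $\rmSUS_T^p = [p,p]$, matching the base case treated in the proof of Lemma~\ref{lem:rmSUS}. Otherwise, I would query $q = \succneq_{\length_T}(p)$ and $b = \predonepos_{\MUSbegin_T}(p)$, again in $O(1)$ time by assumption, and if needed also $\length_T[q]$ in $O(1)$ additional time.

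Next, I would branch as in the statement of Lemma~\ref{lem:rmSUS}: if $q = p+1$ and $\length_T[q] < \ell$, return $[p, p+\ell-1]$ (case 2a); else if $q \leq p+\ell-1$ and $\length_T[q] > \ell$, return $[q-\ell, q-1]$ (case 2b); otherwise return $[b, b+\ell-1]$ (case 2c). Each comparison and each arithmetic operation takes $O(1)$ time on the word RAM, so the entire procedure runs in $O(1)$ time. Correctness is immediate from Lemma~\ref{lem:rmSUS}, which exhaustively characterises $\rmSUS_T^p$ by exactly these three cases.

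I do not foresee a real obstacle here, because all the work has already been done in Lemma~\ref{lem:rmSUS}; the only subtlety is that $\succneq_{\length_T}(p)$ may return $\nil$ (when $\length_T$ is constant on $[p, n]$), in which case the conditions of (2a) and (2b) both fail under the usual convention $\nil > p + \ell - 1$, and we correctly fall through to case (2c). A brief remark confirming this boundary behaviour, together with citing Lemma~\ref{lem:rmSUS} for correctness, will complete the proof.
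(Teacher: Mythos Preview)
Your proposal is correct and matches the paper's approach: the paper states the corollary immediately after Lemma~\ref{lem:rmSUS} without proof, treating it as an obvious algorithmic consequence of that lemma's three-case characterisation. Your write-up simply spells out this direct reading, including the harmless $\nil$ boundary case, and nothing more is needed.
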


\subsection{Compact Representations of $\length$}\label{subsec:Larray_Compact}

We now propose a succinct representation of the array $\length_T$
    consisting of the integer array $\lengthdiff_T$ of length $n$ defined as
    $\lengthdiff_T[1] = 0$ and $\lengthdiff_T[i] = \length_T[i] - \length_T[i-1] \in \{\mathtt{-1},\mathtt{0},\mathtt{1}\}$
    for every $i$ with $2 \le i \le n$.

\begin{lemma}\label{lem:build_tritvector}
    The data structure of Theorem~\ref{thm:intervalSUS_DS} 
    can compute $\length_T[p]$ in constant time with $O(\log n)$ bits of additional working space
    for each $p$ with $1\leq p \leq n$.
\end{lemma}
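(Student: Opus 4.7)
The plan is to express $\length_T[p]$ as a minimum over a constant number of candidates derived from the MUSs, and then to evaluate each candidate in $O(1)$ time using the rank/select on $\MB_T,\ME_T$ and the $\RmQ$ on $\MUSlen_T$ that the data structure of Theorem~\ref{thm:intervalSUS_DS} already supports.

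First, I would argue that every unique substring of $T$ contains at least one MUS (by induction on length, since otherwise every proper substring would also be unique, contradicting minimality). Hence,
\[
\length_T[p] \;=\; \min_{[x,y] \in \MUS_T} |\icover([x,y], p)|,
\]
and unfolding $\icover([x,y],p)$ yields three cases: the cost is $y-x+1$ when $x \leq p \leq y$, $p-x+1$ when $y < p$, and $y-p+1$ when $x > p$. Note that each such $\icover$ is indeed a unique substring, since it is a superstring of a unique MUS.

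Second, I would exploit Lemma~\ref{lem:size_of_mus}: because no two MUSs are nested, when the MUSs are ordered by starting position the ending positions are also strictly increasing. Setting $j_1 = \Succ_{\Y_T}(p)$ and $j_2 = \Pred_{\X_T}(p)$, the MUSs containing $p$ form the (possibly empty) index range $[j_1,j_2]$; the MUSs lying entirely to the left of $p$ form the range $[1, j_1-1]$, among which the cost $p-\X_T[j]+1$ is minimized by the rightmost one, i.e., at index $j_1-1$; and the MUSs lying entirely to the right of $p$ form $[j_2+1,m]$, among which the cost $\Y_T[j]-p+1$ is minimized at index $j_2+1$.

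Third, I would carry out the computation. The indices $j_1$ and $j_2$ as well as the values $\X_T[j_1-1]$, $\Y_T[j_2+1]$, and $\MUSlen_T[j]$ are all obtainable in $O(1)$ via rank/select on $\MB_T$ and $\ME_T$, as already shown in Section~\ref{sec:intervalSUS}. The shortest length of a MUS containing $p$ is $\MUSlen_T[\RmQ_{\MUSlen_T}(j_1,j_2)]$ whenever $j_1 \leq j_2$. Returning the minimum of the (at most three) valid candidates yields $\length_T[p]$. Only a constant number of machine words are kept simultaneously, so the additional working space is $O(\log n)$ bits.

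The main subtlety is the monotonicity argument that the best left and right candidates are precisely those at indices $j_1-1$ and $j_2+1$; this is immediate once the simultaneous monotonicity of $\X_T$ and $\Y_T$ has been extracted from non-nesting. The boundary cases $j_1 = 1$, $j_2 = m$, or $j_1 > j_2$ require only dropping the corresponding candidate from the minimum, so no further machinery beyond the already-built rank/select and $\RmQ$ structures is needed.
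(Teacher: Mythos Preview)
Your argument is correct, but the paper's proof is considerably shorter: it simply invokes the data structure of Theorem~\ref{thm:intervalSUS_DS} as a black box, posing the interval query $[p,p]$ and stopping after the first reported SUS $[i,j]$; then $\length_T[p] = j-i+1$. No separate case analysis is needed because ``one interval SUS in $O(1)$ time'' is already part of what that data structure delivers.

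What you do instead is unfold the query mechanism by hand: you identify the three groups of MUSs (to the left of~$p$, containing~$p$, to the right of~$p$) via $\Succ_{\Y_T}(p)$ and $\Pred_{\X_T}(p)$, use the monotonicity of both $\X_T$ and $\Y_T$ (from non-nesting) to pick the best left/right candidates directly, and apply $\RmQ_{\MUSlen_T}$ on the middle range. This is essentially the same machinery the interval-SUS query uses internally (there with $\Pred_{\Y_T}$ and $\Succ_{\X_T}$ for a general interval $[s,t]$), so the two proofs ultimately rest on the same ingredients. Your version is more self-contained and makes the constant-time claim fully explicit; the paper's version is a two-line reduction. One minor wording issue: the parenthetical ``otherwise every proper substring would also be unique'' is not quite the right justification---the correct descent is that a unique non-MUS has \emph{some} unique proper substring---but the claim itself (every unique substring contains a MUS) is standard and your formula $\length_T[p]=\min_{[x,y]\in\MUS_T}|\icover([x,y],p)|$ is valid.
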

\begin{proof}
    Suppose that we have the data structure~$D$ of Theorem~\ref{thm:intervalSUS_DS} and want to know $\length_T[p]$.
    We query $D$ with the interval~$[p, p]$ to retrieve \emph{one} SUS for the query interval $[p,p]$ in constant time.
    This can be achieved by stopping the retrieval after the first SUS $[i, j] \in \SUS_T([p, p])$ has been reported.
    Since all SUSs for $[p, p]$ have the same length, $\length_T[p] = j-i+1$.
    The additional working space is $O(\log n)$ bits.
\end{proof}

\begin{table}[h]
  \begin{center}
  \begin{tabular}{|c|l|ll|} \hline
      \multicolumn{1}{|c|}{No.} &
      \multicolumn{1}{|c|}{Process} &
      \multicolumn{2}{|c|}{
          \begin{tabular}{c}
              Total working space in bits\\
              (excluding $\MB_T$ and $\ME_T$)
          \end{tabular}
      } \\ \hline
      1 & input $\MB_T$, $\ME_T$ & - & \\ \hline
      2 & construct $\RmQ$ on $\MUSlen_T$ & $2m + o(n)$ & Lemma \ref{lem:construct_intervalSUS_DS}\\ \hline
      3 & construct $\lengthdiff_T$, $\length_T[1]$ & $2n + 2m + o(n)$ & Lemma \ref{lem:build_tritvector}\\ \hline
      4 & free $\RmQ$ on $\MUSlen_T$ & $2n + o(n)$ & \\ \hline
      \multirow{2}{*}{5} & construct \emph{Huffman-shaped} &
      \multirow{2}{*}{$2n + \lceil n\log_2{3}\rceil + o(n)$} & \multirow{2}{*}{Lemma \ref{lem:tritvector_compact}}\\
      & \emph{Wavelet Tree} for $\lengthdiff_T$ & & \\ \hline
      6 & free $\lengthdiff_T$ & $\lceil n\log_2{3}\rceil + o(n)$ & \\ \hline
      7 & construct $\RMQ$ on $\length_T$ & $2n + \lceil n\log_2{3}\rceil + o(n)$ & Lemma \ref{lem:Barray_compact}\\ \hline
      8 & construct $\MUSbegin_T$ & $3n + \lceil n\log_2{3}\rceil + o(n)$ & Lemma \ref{lem:Barray_compact}\\ \hline
  \end{tabular}
  \end{center}
  \caption{Working space used during the construction of the data structure proposed in Theorem~\ref{thm:pointSUS_compact}.
    We can free up space of no longer needed data structures between several steps.
    See also Fig.~\ref{fig:overview} for the dependencies of the execution, and other possible ways to build the final data structure.
    However, these other ways need more maximum working space (at some step) than the way listed in this table.
  }
  \label{tab:pointSUS_compact}
\end{table}

Lemma~\ref{lem:build_tritvector} allows us to compute $\lengthdiff_T$ in $O(n)$ time,
which we represent as an integer array with bit width two, thus using $2n$ bits of space.
In the following, we build a compressed rank/select data structure on $\lengthdiff_T$.
This data structure is a self-index such that we no longer need to keep $\lengthdiff_T$ in memory.
With $\lengthdiff_T$ we can access $\length_T$, as can be seen by the following lemma:

\begin{lemma}\label{lem:tritvector_compact}
    There exists a data structure of size $\lceil n\log_2{3}\rceil + o(n)$ bits
    that can access $\length_T[i]$, and can compute $\predneq_{\length_T}(i)$
    and $\succneq_{\length_T}(i)$ in constant time for each position $i$ with $1 \leq i \leq n$.
    Given $\MB_T$ and $\ME_T$,
    the data structure can be constructed in $O(n)$ time
using $2n + \max\{\lceil n\log_2{3}\rceil, 2m\} + o(n)$ bits of total working space, which
    includes the space for this data structure.
\end{lemma}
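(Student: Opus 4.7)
The plan is to first materialize the ternary array $\lengthdiff_T$ with the help of the interval-SUS data structure of Theorem~\ref{thm:intervalSUS_DS}, and then replace it by a compact self-index whose rank and select queries on the non-zero trits answer all three operations. From $\MB_T$ and $\ME_T$, I first build that interval-SUS data structure (Lemma~\ref{lem:construct_intervalSUS_DS}, $2m+o(n)$ additional bits), which through Lemma~\ref{lem:build_tritvector} yields constant-time access to $\length_T[i]$. A single left-to-right pass then fills $\lengthdiff_T$ into a two-bits-per-entry array of $2n$ bits, recording $\length_T[1]$ in one machine word; afterward the intermediate $\RmQ$ structure is discarded, freeing $2m + o(n)$ bits.

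Next I construct a Huffman-shaped wavelet tree over $\lengthdiff_T$ on the alphabet $\{-1,0,+1\}$, endowing each of its two bit arrays with the rank/select support of Lemma~\ref{lemRank} while encoding the underlying trits in a compact, block-wise enumerative form. The result is a self-indexing structure of $\lceil n\log_2{3}\rceil + o(n)$ bits that supports $\rank_{\lengthdiff_T}(c, i)$ and $\select_{\lengthdiff_T}(c, i)$ for every $c\in\{-1,0,+1\}$ in constant time; the preliminary two-bit copy of $\lengthdiff_T$ can then be freed. Tracing the peak working space across these stages yields $2n + \max\{\lceil n\log_2{3}\rceil,\, 2m\} + o(n)$ bits as claimed, and every step runs in $O(n)$ time.

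To answer a query, I rely on the telescoping identity $\length_T[i] = \length_T[1] + \rank_{\lengthdiff_T}(+1, i) - \rank_{\lengthdiff_T}(-1, i)$ for access, which is one subtraction of two rank queries. For $\predneq_{\length_T}(q)$, Lemma~\ref{lem:Larray_diff} says the length can change by at most one per step, so $\length_T[i]\neq \length_T[q]$ for some $i<q$ iff $\lengthdiff_T[i+1..q]$ contains a non-zero entry. Consequently $\predneq_{\length_T}(q) = j-1$, where $j$ is the largest index $\le q$ with $\lengthdiff_T[j]\neq 0$; this $j$ equals $\max\{\select_{\lengthdiff_T}(+1, \rank_{\lengthdiff_T}(+1, q)),\; \select_{\lengthdiff_T}(-1, \rank_{\lengthdiff_T}(-1, q))\}$, returning $\nil$ when both ranks vanish. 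A symmetric argument using $\succneq_{\length_T}$ handles the right-side query.

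The main obstacle is the compact wavelet-tree bound itself: a Huffman-shaped wavelet tree with plain uncompressed bitmaps can reach $5n/3 > \lceil n\log_2{3}\rceil$ bits in the equidistributed case. I would therefore combine the Huffman shape with a compressed bitmap representation or, equivalently, store $\lengthdiff_T$ as an enumerative block code of $\lceil n\log_2{3}\rceil$ bits augmented by $o(n)$ bits of block-level cumulative counts and a universal intra-block lookup table. Verifying that this combination simultaneously delivers the $\lceil n\log_2{3}\rceil + o(n)$ space bound and constant-time rank/select for each of the three symbols is the delicate piece; the remainder of the argument is routine bookkeeping.
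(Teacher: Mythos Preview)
Your approach is essentially identical to the paper's: build the interval-SUS structure from $\MB_T,\ME_T$ (Lemma~\ref{lem:construct_intervalSUS_DS}), materialize $\lengthdiff_T$ in a $2n$-bit array via Lemma~\ref{lem:build_tritvector}, replace it by a Huffman-shaped wavelet tree, and answer all three queries with exactly the rank/select identities you state. The paper simply asserts the $\lceil n\log_2 3\rceil+o(n)$ bound for the Huffman-shaped wavelet tree by citing~\cite{Makinen2005SSA} without addressing the caveat you raise about the equidistributed case; your extra scrutiny there is justified, but it goes beyond what the paper itself argues.
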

\begin{proof}
     The following equations hold for every text position~$i$ with $1 \leq i \leq n$:
    \begin{eqnarray*}
        \length_T[i]            &=&
            \length_T[1] + \rank_{\lengthdiff_T}(\mathsf{1}, i) - \rank_{\lengthdiff_T}(\mathsf{-1}, i),\\
        \predneq_{\length_T}(i) &=&
            \max\{\select_{\lengthdiff_T}(c,\rank_{\lengthdiff_T}(c,i))-1 \mid c\in\{\mathsf{-1},\mathsf{1}\}\},\\
        \succneq_{\length_T}(i) &=&
            \min\{\select_{\lengthdiff_T}(c,\rank_{\lengthdiff_T}(c,i)+1) \mid c\in\{\mathsf{-1},\mathsf{1}\}\}.
    \end{eqnarray*}
    We can compute the value of $\length_T[1]$ and $\lengthdiff_T$ with Lemma~\ref{lem:build_tritvector}.
With a rank/select data structure on $\lengthdiff_T$
    we can compute the above functions.
    Such a data structure is the Huffman-shaped wavelet tree~\cite{Makinen2005SSA}.
    This data structure can be constructed in linear time and takes $\lceil n\log_2{3}\rceil + o(n)$ bits of space,
    since the possible number of different values in $\lengthdiff_T$ is three.
    Therefore, it can also provide answers to rank/select queries in constant time.
\end{proof}

Finally, we show how to compute $\MUSbegin_T$:

\begin{lemma}\label{lem:Barray_compact}
    There exists a data structure of size $n + o(n)$ bits
    that can compute
    $\succonepos_{\MUSbegin_T}(i)$ and $\predonepos_{\MUSbegin_T}(i)$
    in constant time for each position $1 \leq i \leq n$.
    Given $\MB_T$ and $\ME_T$,
    this data structure can be constructed in $O(n)$ time
using $3n + \lceil n\log_2{3}\rceil + o(n)$ bits of
    total working space including the space for this data structure.
\end{lemma}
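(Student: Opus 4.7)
The plan is to derive a local characterization of meaningful MUSs that reduces the membership test to a single range-maximum query on $\length_T$, and then to drive a linear scan through all $m$ MUSs. Concretely, I aim to prove: a MUS $[x,y]$ of length $\ell = y - x + 1$ is meaningful if and only if $\max_{p \in [x,y]} \length_T[p] = \ell$. Establishing this equivalence is the main obstacle; the remainder is assembly of tools already built.

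The ``$\Leftarrow$'' direction is immediate: if some $p \in [x,y]$ satisfies $\length_T[p] = \ell$, then $[x,y]$ is a unique interval of length $\ell$ containing $p$, hence $[x,y] \in \SUS_T(p)$ and $[x,y]$ is meaningful. For ``$\Rightarrow$'', let $p$ witness meaningfulness, i.e.\ $\pcover([x,y],p) \in \SUS_T(p)$. If $p \in [x,y]$, the cover equals $[x,y]$ and $\length_T[p] = \ell$ directly. If $p < x$, the cover is $[p, y]$, giving $\length_T[p] = y - p + 1 = \ell + (x - p)$; iterating Lemma~\ref{lem:Larray_diff} from $p$ up to $x$ forces $\length_T[x] \geq \length_T[p] - (x - p) = \ell$, while the trivial bound $\length_T[q] \leq \ell$ for every $q \in [x,y]$ (since $[x,y]$ itself is a unique substring of length $\ell$ containing $q$) yields $\length_T[x] = \ell$. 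The case $p > y$ is symmetric, yielding $\length_T[y] = \ell$. In every case some position in $[x,y]$ attains the value $\ell$, proving the claim.

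Given the claim, I would proceed as follows. First, build an $\RMQ$ data structure on $\length_T$ via Lemma~\ref{lemRMQ}, relying on the $O(1)$-time access to $\length_T$ afforded by Lemma~\ref{lem:tritvector_compact}; this costs $O(n)$ time and $2n + o(n)$ bits. Second, iterate over the $m$ MUSs $[x_i, y_i]$ obtained by rank/select on $\MB_T$ and $\ME_T$, querying $\RMQ_{\length_T}(x_i, y_i)$ to get a position $p^*$, and setting $\MUSbegin_T[x_i] := 1$ exactly when $\length_T[p^*] = y_i - x_i + 1$. Third, endow $\MUSbegin_T$ with the rank/select data structure of Lemma~\ref{lemRank}, so that $\predonepos_{\MUSbegin_T}$ and $\succonepos_{\MUSbegin_T}$ reduce to constant-time rank/select queries, in $n + o(n)$ bits total.

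The running time is $O(n)$ since $m \leq n$ by Lemma~\ref{lem:size_of_mus} and each iteration is $O(1)$. For peak working space, the $n$-bit array $\MUSbegin_T$, the $\RMQ$ structure on $\length_T$ ($2n + o(n)$ bits), and the wavelet tree of Lemma~\ref{lem:tritvector_compact} ($\lceil n\log_2 3 \rceil + o(n)$ bits) coexist during the scan, summing to $3n + \lceil n\log_2 3 \rceil + o(n)$ bits as claimed.
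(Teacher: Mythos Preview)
Your proof is correct and follows essentially the same approach as the paper: characterize meaningful MUSs via $\max_{p\in[x,y]}\length_T[p]=\ell$, test this with an $\RMQ$ on $\length_T$ (accessed through Lemma~\ref{lem:tritvector_compact}), and equip the resulting bit array $\MUSbegin_T$ with rank/select. Your justification of the ``$\Rightarrow$'' direction via Lemma~\ref{lem:Larray_diff} is in fact more detailed than the paper's, which simply asserts that $\length_T[x_i]<e_i-b_i+1$ implies meaninglessness.
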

\begin{proof}
    Our idea is to compute $\MUSbegin_T$ since
    the following equations hold for every text position~$i$ with $1 \leq i \leq n$ (cf.\ $BIT_Y$ in Section~\ref{secPredecessor}):

    \begin{eqnarray*}
    \predonepos_{\MUSbegin_T}(i) &=&
    \begin{cases}
        i \qquad &~\mbox{if }\MUSbegin_T[i]=\mathsf{1},\\
        \select_{\MUSbegin_T}(\mathsf{1},\rank_{\MUSbegin_T}(\mathsf{1},i)) \qquad &~\mbox{if }\MUSbegin_T[i]=\mathsf{0}.\\
    \end{cases}\\
    \succonepos_{\MUSbegin_T}(i) &=&
    \begin{cases}
        i ~&~\mbox{if }\MUSbegin_T[i]=\mathsf{1},\\
        \select_{\MUSbegin_T}(\mathsf{1},\rank_{\MUSbegin_T}(\mathsf{1},i)+1) ~&~\mbox{if }\MUSbegin_T[i]=\mathsf{0}.\\
    \end{cases}
    \end{eqnarray*}
In the following we show how to compute $\MUSbegin_T$ from $\MB_T$ and $\ME_T$
    in linear time with linear number of bits of working space.
    Let $b_i = \select_{\MB_T}(\mathtt{1}, i)$ and $e_i = \select_{\ME_T}(\mathtt{1}, i)$
    be the starting position and the ending position of the $i$-th MUS respectively,
    for each $1 \leq i \leq m$.
    Given $x_i = \RMQ_{\length_T}(b_i, e_i)$,
    $\length_T[x_i] \leq e_i-b_i+1$
    since $b_i \leq x_i \leq e_i$ and $[b_i, e_i]$ is unique.
    If $\length_T[x_i] < e_i-b_i+1$,
    there is no position $p$ with $\icover([b_i, e_i], p) \in \SUS_T(p)$, i.e.,
    $[b_i, e_i]$ is a meaningless MUS.
    Otherwise ($\length_T[x_i] = e_i-b_i+1$),
    $\icover([b_i, e_i], x_i) = [b_i, e_i] \in \SUS_T(x_i)$, i.e.,
    $[b_i, e_i]$ is a meaningful MUS.
    Hence, it can be detected in constant time
    whether a MUS is meaningful
    by an $\RMQ$ query on $\length_T$.
    We can compute the compact representation of $\length_T$ described in Lemma~\ref{lem:tritvector_compact}.
    The data structure takes $\lceil n\log_2{3}\rceil + o(n)$ bits and can be constructed with $2n + \max\{\lceil n\log_2{3}\rceil, 2m\} + o(n)$ bits of total working space.
    Subsequently, we endow it with the $\RMQ$ data structure of Lemma~\ref{lemRMQ}
    in $O(n)$ time using $2n + o(n)$ bits of space.
    Therefore, the computing time of $\MUSbegin_T$ is $O(n)$
    and the working space is, aside from the space for $\MB_T$ and $\ME_T$,
    $3n + \lceil n\log_2{3}\rceil + o(n)$ bits, including the space for $\MUSbegin_T$.
    Finally, we can endow $\MUSbegin_T$ with rank/select data structures,
    which allows us to compute each of the above two functions $\predonepos_{\MUSbegin_T}$ and $\succonepos_{\MUSbegin_T}$ in constant time.
\end{proof}

Actually, having $\MB_T$ and $\ME_T$ available,
we can simulate an access to $\length_T[i]$ in constant time with Lemma~\ref{lem:build_tritvector}
by using the $\RmQ$ data structure on $\MUSlen_T$.
This allows us to compute the $\RMQ$ data structure on $\length_T$ directly
without the need for computing $\MUSbegin_T$ in the first place,
i.e., we can replace the working space of Lemma~\ref{lem:Barray_compact} with $2n + 2m + o(n)$ additional bits of working space.
However, since our final data structure needs $\lengthdiff_T$,
computing $\MUSbegin_T$ before $\lengthdiff_T$ would require more working space in the end than the other way around,
since we no longer need the $\RmQ$ data structure on $\MUSlen_T$
after having built the rank/select data structure of Lemma~\ref{lem:tritvector_compact}.

Before stating our final theorem, we need a property for meaningful MUSs:
\begin{lemma} \label{lem:leftmost_rightmost}
    On the one hand, $\pcover([s_i, e_i], p) \in \SUS_T(p)$
    for every meaningful MUS $[s_i, e_i]$
    starting with or after the leftmost MUS for $p$ and starting before or with the rightmost MUS for $p$.
    On the other hand, each element (i.e., an interval) of $\SUS_T(p)$
    starting with or after the leftmost MUS for $p$ and starting before or with the rightmost MUS for $p$
    contains exactly one distinct MUS.
\end{lemma}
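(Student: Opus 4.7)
The plan is to prove the two halves of the lemma separately; together they establish a bijection between $\SUS_T(p)$ and the set of meaningful MUSs with starting position in $[s_L,s_R]$, where I write $[s_L,e_L] := \lmMUS_T^p$, $[s_R,e_R] := \rmMUS_T^p$, and $\ell := \length_T[p]$. A useful observation up front is that $|\lmSUS_T^p| = |\rmSUS_T^p| = \ell$ forces $e_L - s_L + 1 \le \ell$ and $e_R - s_R + 1 \le \ell$, since $\lmSUS_T^p = \pcover([s_L,e_L],p) \supseteq [s_L,e_L]$ and similarly for $\rmSUS_T^p$.

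For the first half, fix a meaningful MUS $[s_i,e_i]$ with $s_L \le s_i \le s_R$; non-nesting of MUSs (Lemma~\ref{lem:size_of_mus}) then yields $e_L \le e_i \le e_R$. Let $I := \pcover([s_i,e_i], p)$. Since $I$ contains both the unique substring $[s_i,e_i]$ and the position $p$, it is a unique substring containing $p$, so $|I| \ge \ell$. For the reverse inequality $|I| \le \ell$, I would split on where $p$ sits. If $p < s_i$, then $s_R \ge s_i > p$, and direct computation gives $\rmSUS_T^p = [p, e_R]$ of length $\ell$, so $e_R = p + \ell - 1$ and $|I| = e_i - p + 1 \le e_R - p + 1 = \ell$; the case $p > e_i$ is symmetric via $\lmSUS_T^p$. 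The subtle remaining case is $p \in [s_i, e_i]$, in which $I = [s_i, e_i]$ and $|I| = \ell_i := e_i - s_i + 1$; here I must show $\ell_i \le \ell$. I would argue by contradiction, assuming $\ell_i > \ell$, and combine the length bounds on $[s_L,e_L]$ and $[s_R,e_R]$ with non-nesting to deduce that every position $q \in [s_i,e_i]$ is contained in a unique substring of length at most $\ell$ (drawn from $[s_L,e_L]$, $[s_R,e_R]$, or a short extension thereof). This gives $\length_T[q] \le \ell < \ell_i$ throughout $[s_i,e_i]$, which precludes $\pcover([s_i,e_i],q)$ from being a SUS for any $q$ and contradicts the meaningfulness of $[s_i,e_i]$.

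For the second half, take $S \in \SUS_T(p)$ and let $M = [s,e]$ be its unique contained MUS (Lemma~\ref{lemOneMUS}). Since $S$ contains both $M$ and $p$, we have $S \supseteq \pcover(M,p)$; but $\pcover(M,p)$ is itself a unique substring containing $p$, so $|\pcover(M,p)| \ge \ell = |S|$, which forces $S = \pcover(M,p)$. In particular $M$ is meaningful (witnessed by $p$), and the definitions of $\lmMUS_T^p, \rmMUS_T^p$ combined with non-nesting imply $s_L \le s \le s_R$ (any MUS with $s < s_L$ would yield a SUS starting strictly before $\lmSUS_T^p$, contradicting its leftmost property). Distinctness then follows automatically, since two SUSs containing the same MUS would both equal $\pcover(M,p)$. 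The main obstacle is the sub-case $p \in [s_i, e_i]$ in the first half: ruling out a meaningful MUS in the window that strictly contains $p$ and is longer than $\ell$ requires careful case analysis depending on whether $p$ itself lies in $[s_L,e_L]$, in $[s_R,e_R]$, or in a gap between them, and on how $[s_L,e_L]$ and $[s_R,e_R]$ cover $[s_i,e_i]$.
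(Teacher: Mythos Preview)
Your proposal takes a different route from the paper, which simply defers the first half to Tsuruta et al.\ \cite[Lemma~3]{Tsuruta2014SUS} and the second half to Lemma~\ref{lemOneMUS}. Your self-contained argument is worth having; the second half and the first two sub-cases of the first half are clean and correct.

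There is, however, one genuine gap in the sub-case $p\in[s_i,e_i]$. You deduce $\length_T[q]\le\ell<\ell_i$ for all $q\in[s_i,e_i]$ and then assert this contradicts meaningfulness of $[s_i,e_i]$. But a witness for meaningfulness might lie \emph{outside} $[s_i,e_i]$: if $q'<s_i$ and $\pcover([s_i,e_i],q')=[q',e_i]\in\SUS_T(q')$, nothing you wrote rules this out. The fix is short: since $\length_T[s_i]<\ell_i$ there is a unique interval $V=[a,c]\ni s_i$ with $c-a+1<\ell_i$, hence $c<e_i$; if $q'\ge a$ then $q'\in V$ already, and if $q'<a$ then $[q',c]\supseteq V$ is unique of length $<e_i-q'+1$. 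Either way $\length_T[q']<e_i-q'+1$, a contradiction. (The case $q'>e_i$ is symmetric via $\length_T[e_i]<\ell_i$.) This is precisely the characterisation of meaningfulness used, without proof, in the proof of Lemma~\ref{lem:Barray_compact}.

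A second remark: the ``careful case analysis'' you anticipate for covering $[s_i,e_i]$ is unnecessary. Because $s_L\le s_i\le p\le e_i\le e_R$, we have $\lmSUS_T^p=[s_L,\max(e_L,p)]$ and $\rmSUS_T^p=[\min(s_R,p),e_R]$; both contain $p$, so their union is the single interval $[s_L,e_R]\supseteq[s_i,e_i]$. Hence every $q\in[s_i,e_i]$ lies in one of these two length-$\ell$ unique intervals, giving $\length_T[q]\le\ell$ with no case split on where $p$ sits relative to $[s_L,e_L]$ or $[s_R,e_R]$.
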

\begin{proof}
    The first part is shown by Tsuruta et al.~\cite[Lemma~3]{Tsuruta2014SUS}.
    The second part is due to Lemma~\ref{lemOneMUS}.
\end{proof}

\begin{theorem}\label{thm:pointSUS_compact}
    For the point SUS problem, there exists a data structure of size $n + \lceil n\log_2{3}\rceil + o(n)$ bits
    that can answer a point SUS query in $O(\occ)$ time,
    where $\occ$ is the number of SUSs of $T$ for the respective query point.
    Given $\MB_T$ and $\ME_T$,
    the data structure can be constructed in $O(n)$ time
using $3n + \lceil n\log_2{3}\rceil + o(n)$ bits of total working space, which
    includes the space for this data structure.
\end{theorem}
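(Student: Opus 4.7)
The plan is to assemble the final data structure from two components built earlier in the section and to argue that query time is $O(\occ)$ per SUS reported.

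\textbf{Components and size.} The data structure consists of (i) the compact representation of $\length_T$ obtained via the Huffman-shaped wavelet tree on $\lengthdiff_T$ from Lemma~\ref{lem:tritvector_compact}, contributing $\lceil n\log_2 3\rceil + o(n)$ bits and supporting access to $\length_T[i]$, $\predneq_{\length_T}$, and $\succneq_{\length_T}$ in $O(1)$, and (ii) the bit array $\MUSbegin_T$ endowed with the rank/select structure of Lemma~\ref{lemRank} from Lemma~\ref{lem:Barray_compact}, contributing $n + o(n)$ bits and supporting $\predonepos_{\MUSbegin_T}$ and $\succonepos_{\MUSbegin_T}$ in $O(1)$. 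Summing these yields the claimed $n + \lceil n \log_2 3 \rceil + o(n)$ bits.

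\textbf{Query algorithm.} For a query position $p$, first compute $\lmSUS_T^p$ in $O(1)$ via Corollary~\ref{col:lmSUS} and $\rmSUS_T^p$ in $O(1)$ via Corollary~\ref{col:rmSUS}; in particular this determines $\length_T[p]$ and the starting positions of the leftmost meaningful MUS $\lmMUS_T^p$ and the rightmost meaningful MUS $\rmMUS_T^p$ for $p$. By Lemma~\ref{lem:leftmost_rightmost}, the SUSs for $p$ are in bijection with the meaningful MUSs whose starting position lies in the interval bounded by the starts of $\lmMUS_T^p$ and $\rmMUS_T^p$, and each such MUS $[s_i,e_i]$ yields the SUS $\pcover([s_i,e_i],p)$ of length $\length_T[p]$. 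We enumerate these meaningful MUSs by repeatedly calling $\succonepos_{\MUSbegin_T}$, starting from the beginning position of $\lmMUS_T^p$, until exceeding the start of $\rmMUS_T^p$. Each step is $O(1)$, so the total query time is $O(\occ)$.

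\textbf{Construction and working space.} Given $\MB_T$ and $\ME_T$, we follow the order laid out in Table~\ref{tab:pointSUS_compact}. We first equip the pair $(\MB_T,\ME_T)$ with the $\RmQ$ data structure on $\MUSlen_T$ using Lemma~\ref{lem:construct_intervalSUS_DS}, letting us simulate $\length_T[i]$ in $O(1)$ by Lemma~\ref{lem:build_tritvector}, and then build $\lengthdiff_T$ as a $2n$-bit integer array together with $\length_T[1]$ in an $O(n)$-time linear scan. We then discard the $\RmQ$ structure, construct the Huffman-shaped wavelet tree on $\lengthdiff_T$ in $O(n)$ time, and discard $\lengthdiff_T$; at this point we have the structure of Lemma~\ref{lem:tritvector_compact} in $\lceil n\log_2 3 \rceil + o(n)$ bits. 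Finally, we add the $\RMQ$ data structure on $\length_T$ (Lemma~\ref{lemRMQ}, using $\length_T$ accessed through the compact representation), use it to compute $\MUSbegin_T$ as in the proof of Lemma~\ref{lem:Barray_compact}, discard the $\RMQ$ structure, and endow $\MUSbegin_T$ with rank/select. The peak working space occurs when both the $\RMQ$ on $\length_T$ ($2n+o(n)$ bits) and $\MUSbegin_T$ with its auxiliary structures ($n+o(n)$ bits) coexist with the compact $\length_T$ representation, giving $3n + \lceil n\log_2 3 \rceil + o(n)$ bits in total.

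\textbf{Main obstacle.} The substantive content has already been absorbed into the lemmas of this section; the remaining difficulty is purely bookkeeping, namely scheduling the construction so that intermediate structures (the $\RmQ$ on $\MUSlen_T$, the integer array $\lengthdiff_T$, and the $\RMQ$ on $\length_T$) can be freed before the next more expensive one is allocated, thus keeping the peak memory at the claimed bound rather than the larger sum of the individual working-space costs.
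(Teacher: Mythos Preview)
Your proposal is correct and takes essentially the same approach as the paper: assemble the two components of Lemmas~\ref{lem:tritvector_compact} and~\ref{lem:Barray_compact}, locate the extremal SUSs via Corollaries~\ref{col:lmSUS} and~\ref{col:rmSUS}, enumerate the intermediate ones by walking over $\MUSbegin_T$ using Lemma~\ref{lem:leftmost_rightmost}, and trace Table~\ref{tab:pointSUS_compact} for the working-space bound. One small imprecision worth noting: you assert that computing $\lmSUS_T^p$ and $\rmSUS_T^p$ directly determines the starting positions of $\lmMUS_T^p$ and $\rmMUS_T^p$, but when $\rmSUS_T^p$ begins at $p$ (case~(2a) of Lemma~\ref{lem:rmSUS}) the contained MUS may begin strictly after $p$; the paper avoids this by iterating from the start of $\lmSUS_T^p$ and stopping once the enumerated position exceeds the start of $\rmSUS_T^p$, both of which are known directly. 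This is a cosmetic difference, not a genuine gap.
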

\begin{proof}
    Let $p$ be a query position, and suppose that the number of SUSs for $p$ is $\occ$.
    Like the MUSs in Section~\ref{sec:MUScompact}, we rank the SUSs for $p$ by their starting positions.
    Let $[s_j, e_j]$ be the $j$-th SUS for $p$ with $1 \le j \le \occ$
    such that $[s_1, e_1]$ and $[s_{\occ}, e_{\occ}]$ are the leftmost SUS and the rightmost SUS for~$p$, respectively.
    If $s_1 = p$ then $[s_1, e_1] = [s_{\occ}, e_{\occ}]$, and thus the output consists of this single interval.
    Otherwise ($s_1 \neq p$),
    we can compute $s_{i}$ iteratively from $s_{i-1}$
    by $s_i = \select_{\MUSbegin_T}(\mathtt{1}, \rank_{\MUSbegin_T}(\mathtt{1}, s_{i-1}) + 1)$
    in constant time for each $i$ with $2 \le i \le \occ-1$,
    allowing us to answer the query in time linear to the number of SUSs.
    As $\occ$ is not known in advance, we stop the iteration
    whenever we computed an $s_i$ that is larger than the starting position of the rightmost SUS for~$p$.
    A detailed analysis of the claimed working space is given in Table~\ref{tab:pointSUS_compact}.
\end{proof}

\begin{corollary}\label{col:num_of_SUSs}
  The data structure of Theorem~\ref{thm:pointSUS_compact}
  can compute the number of SUSs for a query position in constant time.
\end{corollary}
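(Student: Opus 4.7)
The plan is to express $\occ = |\SUS_T(p)|$ as a single rank difference on $\MUSbegin_T$. By Lemma~\ref{lem:leftmost_rightmost}, the SUSs for $p$ are in bijection with the meaningful MUSs whose starting positions lie in the closed interval $[b_L, b_R]$, where $b_L$ and $b_R$ denote the starting positions of $\lmMUS_T^p$ and $\rmMUS_T^p$, respectively. Consequently,
\[
  \occ \;=\; \rank_{\MUSbegin_T}(\mathtt{1}, b_R) \;-\; \rank_{\MUSbegin_T}(\mathtt{1}, b_L - 1),
\]
which the rank/select index on $\MUSbegin_T$ that is part of the data structure of Theorem~\ref{thm:pointSUS_compact} evaluates in constant time.

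It then suffices to compute $b_L$ and $b_R$ in constant time. First, I would call Corollaries~\ref{col:lmSUS} and~\ref{col:rmSUS} to retrieve $\lmSUS_T^p = [s_1, e_1]$ and $\rmSUS_T^p = [s_{\occ}, e_{\occ}]$ in constant time; then I would set $b_L = \succonepos_{\MUSbegin_T}(s_1)$ and $b_R = \succonepos_{\MUSbegin_T}(s_{\occ})$, both again constant-time queries on $\MUSbegin_T$.

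The main obstacle is verifying these two identities. The key observation is that, for any point SUS $[s, e]$, the unique MUS contained in it (guaranteed by Lemma~\ref{lemOneMUS}) is meaningful, so its starting position is at least $\succonepos_{\MUSbegin_T}(s)$. Conversely, any strictly earlier meaningful MUS start $b'$ with $s \le b' < b_L$ would, by the non-nesting property of MUSs (Lemma~\ref{lem:size_of_mus}), correspond to a MUS whose ending position is strictly smaller than that of the contained MUS and therefore lies entirely inside $[s, e]$, giving a second MUS contained in the SUS and contradicting Lemma~\ref{lemOneMUS}. Applying this observation to $\lmSUS_T^p$ and to $\rmSUS_T^p$ yields the two identities above, and combined with the constant-time rank query it gives the claimed $O(1)$ running time for computing $\occ$.
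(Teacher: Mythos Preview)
Your proof is correct and follows essentially the same approach as the paper's: compute $\lmSUS_T^p$ and $\rmSUS_T^p$ via Lemmas~\ref{lem:lmSUS} and~\ref{lem:rmSUS}, then obtain $\occ$ as a constant-time rank difference on $\MUSbegin_T$. The only cosmetic difference is that you first locate $b_L$ and $b_R$ with two $\succonepos_{\MUSbegin_T}$ queries and rank over $[b_L,b_R]$, whereas the paper ranks directly over the open interval $(s_l,s_r)$ determined by the SUS starting positions and adds~$2$.
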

\begin{proof}
    Let $[s_l, e_l]$ and $[s_r, e_r]$ be the leftmost and the rightmost SUS for a given query position, respectively.
    All MUSs starting between $s_l$ and $s_r$ (excluding $s_l$ and $s_r$) are SUSs for this query position.
    Let $occ'$ be their number.
    Therefore, the number we want to output is $occ = occ'+2$.
    With Lemmas~\ref{lem:lmSUS} and~\ref{lem:rmSUS}, we can find $[s_l, e_l]$ and $[s_r, e_r]$ in constant time.
    Further, we can compute $occ'$ in constant time since $occ' = \rank_{\MUSbegin_T}(\mathtt{1}, s_r-1)-\rank_{\MUSbegin_T}(\mathtt{1}, s_l)$.
\end{proof}
\begin{table}
  \begin{center}
    \begin{tabular}{lllll}
      \toprule
      & & \multicolumn{2}{c}{Time}
      \\
      \cmidrule{3-4}
      data structure & Lemma & Access & Construction & Space in bits
      \\
      &  & $\pi_a(n)$                  & $\pi_c(n)$                     & $\pi_s(n)$ \\
      \midrule
      sparse suffix tree     & \ref{lemLCPSST} & $O(\epsilon^{-1})$          & $O(\epsilon^{-2} n)$           & $(1+\epsilon)n \log n + O(n)$ \\
      com\-pressed suffix tree & \ref{lemLCPCST} & $O(\log_\sigma^\epsilon n)$ & $O(n \log_\sigma^\epsilon n )$ & $O(\epsilon^{-1} n \log \sigma)$ \\
      RLBWT                  & \ref{lemLCPBWT} & $O(\log r \log^{O(1)} n)$   & $O(n \log r + r \log^{O(1)} n)$       & $r \log^{O(1)} n + O(n)$  \\
      \bottomrule
    \end{tabular}
    \caption{Efficient data structures with access to $\ISA_T[i]$ and $\LCP_T[i]$. $\epsilon$ is a constant with $0 < \epsilon \le 1$. $r$ denotes the number of single character runs in the BWT.}
    \label{tabLCPDS}
  \end{center}
\end{table}

\section{Auxiliary Data Structure }
In the following, we present three different representations of the data structure required by Lemma~\ref{lemBuildBitVectors}.
See Table~\ref{tabLCPDS} for a juxtaposition of their characteristics.
Also, we present an algorithm for computing MUSs by using $\RankNext_T$ and a succinct representation of $\PLCP_T$.

\begin{lemma}[{\cite[Sect.\ 2.2.3]{fischer18lz}}]\label{lemLCPSST}
    Given a constant~$\epsilon$ with $0 < \epsilon \le 1$,
    there is a data structure that can access $\ISA_T[i]$ and $\LCP_T[i]$
    in $\pi_a(n) = O(\epsilon^{-1})$ time using $\pi_s(n) = (1+\epsilon)n \log n + O(n)$ bits of working space.
    It can be constructed within this working space in $\pi_c(n) = O(\epsilon^{-2} n)$ time.
\end{lemma}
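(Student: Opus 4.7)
The plan is to combine a fully-stored suffix array with a sparsely sampled inverse suffix array, both supported by a succinct encoding of $\PLCP_T$. First, construct $\SA_T$ in $O(n)$ time using any linear-time suffix array construction algorithm; this contributes $n\log n$ bits. Second, store $\ISA_T$ only at every $k$-th text position for $k := \lceil 1/\epsilon\rceil$; the resulting sampled inverse suffix array has $\lceil n/k\rceil$ entries, takes $\epsilon n\log n + O(\log n)$ bits, and can be filled by a single linear scan over $\SA_T$. Together these two arrays account for the $(1+\epsilon)n\log n$ bits claimed in the lemma, with all other auxiliary material kept within $O(n)$ bits.

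I would dispatch $\LCP_T[i]$-access via the permuted LCP array: build a standard succinct $O(n)$-bit encoding of $\PLCP_T$ in $O(n)$ time, exploiting the classical property $\PLCP_T[i+1] \geq \PLCP_T[i]-1$, and then answer $\LCP_T[i] = \PLCP_T[\SA_T[i]]$ in constant time by one lookup in the dense $\SA_T$ followed by one access to $\PLCP_T$. This stays inside the $O(n)$-bit slack of the space bound and the $O(n)$ construction budget, so $\LCP$-access is never the bottleneck.

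The main obstacle is supporting $O(\epsilon^{-1})$-time access to $\ISA_T[i]$ at an unsampled position. Given $i$, locate the nearest sampled position $j = k\lfloor i/k\rfloor$ so that $i - j < k = O(1/\epsilon)$, read $\ISA_T[j]$ from the sample, and then walk $i - j$ times via the $\Psi$-function, using $\Psi[p] = \ISA_T[\SA_T[p]+1]$ so that one step advances the underlying text position by exactly one. The delicate point is evaluating each $\Psi$-step in constant time without paying an additional $n\log n$ bits to store $\Psi$ explicitly; this is achieved via a small auxiliary structure exploiting the piecewise-monotone shape of $\Psi$ together with the dense $\SA_T$, following the construction in~\cite[Sect.\ 2.2.3]{fischer18lz}. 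Preparing this auxiliary structure over the $\lceil n/k\rceil$ sampling blocks of length $k$ accounts for the extra $\epsilon^{-1}$ factor in the construction time, yielding the claimed $O(\epsilon^{-2}n)$ bound while keeping the total working space within $(1+\epsilon)n\log n + O(n)$ bits.
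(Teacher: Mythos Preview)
The paper does not prove this lemma; it is quoted from~\cite[Sect.~2.2.3]{fischer18lz} and used as a black box, so there is no in-paper argument to compare your proposal against.

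At the top level your sketch does match the cited construction: the dense $\SA_T$ together with an $\epsilon$-sampled $\ISA_T$ accounts for the $(1+\epsilon)n\log n$ bits, a $2n$-bit succinct $\PLCP_T$ sits inside the $O(n)$ term, and $\LCP_T[i]=\PLCP_T[\SA_T[i]]$ is constant time. The gap is at the only non-routine step, the constant-time $\Psi$ (equivalently $\mathrm{LF}$) move used to walk from a sampled $\ISA$ entry to an unsampled one. You do not argue this step; you send it back to~\cite{fischer18lz}, which makes the proposal circular as a standalone proof. Moreover, the mechanism you name---``the piecewise-monotone shape of $\Psi$ together with the dense $\SA_T$''---is not what the cited work uses: navigation there is via rank over the BWT (readable as $T[\SA_T[\cdot]-1]$ from the input text and the stored $\SA_T$), and constant-time $\Psi$ access does not follow from monotonicity plus a plain $\SA_T$ array alone. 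Finally, your accounting for the $O(\epsilon^{-2}n)$ construction time does not add up: ``preparing this auxiliary structure over the $\lceil n/k\rceil$ sampling blocks'' as you describe it produces at most one factor of $\epsilon^{-1}$, not two.
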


\begin{lemma}\label{lemLCPCST}
    Given a constant~$\epsilon$ with $0 < \epsilon \le 1$,
    there is a data structure that can access $\ISA_T[i]$ and $\LCP_T[i]$
    in $\pi_a(n) = O(\log_\sigma^\epsilon n)$ time using $\pi_s(n) = O(\epsilon^{-1} n \log \sigma)$ bits of working space.
    It can be constructed within this working space in $\pi_c(n) = O(n \log_\sigma^\epsilon n)$ time.
\end{lemma}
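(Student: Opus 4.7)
The plan is to assemble a compact compressed suffix tree in $O(\epsilon^{-1} n \log \sigma)$ bits whose subcomponents provide the required accesses: a compressed suffix array (CSA) supplies $\ISA_T$ (and $\SA_T$) queries, and Sadakane's permuted LCP bit vector supplies $\LCP_T$ queries via the identity $\LCP_T[i] = \PLCP_T[\SA_T[i]]$.

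I would first construct the BWT of $T$ in $O(n)$ time and $O(n \log \sigma)$ bits using a known compact BWT construction. On top of the BWT I would build a sampled CSA: store $\SA_T[i]$ explicitly only when $i$ lies on a regular sample grid of spacing $\Theta(\log_\sigma^\epsilon n)$, and recover unsampled values by iterating the LF-mapping via the BWT (each step costing $O(1)$ using a wavelet tree on the BWT). An analogous sampling on $\ISA_T$ yields the symmetric query. Each of $\SA_T[i]$ and $\ISA_T[i]$ is then answered in $O(\log_\sigma^\epsilon n)$ time, and the total space is $O(\epsilon^{-1} n \log \sigma)$ bits.

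Next I would materialize Sadakane's PLCP bit vector, which encodes $\PLCP_T$ in $2n + o(n)$ bits and supports $O(1)$-time access to any $\PLCP_T[i]$. The construction uses the Kasai-style invariant $\PLCP_T[i] \ge \PLCP_T[i-1] - 1$: scan text positions $i = 1, \dots, n$ in order, maintain a running length $\ell$, and at each step extend $\ell$ by direct character comparison between $T[i..]$ and $T[\RankPrev_T[i]..]$, where $\RankPrev_T[i]$ is obtained from the CSA as $\SA_T[\ISA_T[i] - 1]$. The character-comparison work telescopes to $O(n)$ in total because $\ell$ can be decremented at most $n$ times and each increment corresponds to a matched character; the $O(n)$ CSA queries contribute $O(n \log_\sigma^\epsilon n)$. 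Combining everything, $\LCP_T[i]$ is answered by one $\SA_T[i]$ query followed by a constant-time $\PLCP_T$ lookup, yielding the claimed $O(\log_\sigma^\epsilon n)$ access time.

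The main obstacle is to keep the PLCP construction inside the claimed $O(\epsilon^{-1} n \log \sigma)$-bit working space, since a naive Kasai-style routine would materialize the uncompressed $\SA$ in $\Theta(n \log n)$ bits. The resolution is precisely the combination of two ingredients above: the sampled CSA provides $\SA_T$/$\ISA_T$ access in small space, and the amortized linear bound on character comparisons ensures that the extra $O(\log_\sigma^\epsilon n)$ factor is only incurred $n$ times, not once per comparison. All sub-ingredients—compact BWT construction, sampled CSA, and Sadakane's PLCP—are standard building blocks, and the lemma is obtained by assembling them with careful accounting.
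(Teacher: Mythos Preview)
Your overall plan matches the paper's: combine a compressed suffix array giving $\SA_T$ and $\ISA_T$ access with Sadakane's succinct $\PLCP$ bit vector, and read $\LCP_T[i]=\PLCP_T[\SA_T[i]]$. The paper phrases this as ``take the compressed suffix tree of Munro--Navarro--Nekrich (which already exposes $\PLCP$), add the Grossi--Vitter $\SA$ sampling, and obtain $\ISA$ via the succinct permutation-inverse representation of Munro, Raman, Raman and Rao.'' Your Kasai-style reconstruction of $\PLCP$ through the CSA, computing $\RankPrev_T[i]=\SA_T[\ISA_T[i]-1]$ on the fly, is a legitimate alternative to the black-box citation and fits the stated time and space budgets.

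There is, however, a genuine gap in your CSA component. You assert that an LF step costs $O(1)$ ``using a wavelet tree on the BWT'', but a wavelet tree over an alphabet of size~$\sigma$ answers rank in $\Theta(\log\sigma)$ time, not $O(1)$, and the paper allows $\sigma=n^{O(1)}$. With LF costing $\Theta(\log\sigma)$ and sample spacing~$s$, an $\SA$ (or $\ISA$) access takes $\Theta(s\log\sigma)$ time while the samples occupy $\Theta((n/s)\log n)$ bits; for polynomial~$\sigma$ no choice of~$s$ simultaneously meets the $O(\log_\sigma^{\epsilon} n)$ time bound and the $O(\epsilon^{-1} n\log\sigma)$ space bound. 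The remedy is exactly what the paper invokes: replace the BWT-plus-wavelet-tree CSA by the hierarchical Grossi--Vitter CSA, which is designed to hit this particular time/space trade-off, and (optionally) recover $\ISA$ from it via the succinct permutation-inverse structure rather than a second round of sampling. With that substitution the remainder of your argument goes through unchanged.
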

\begin{proof}
    We use the compressed suffix tree presented in \cite{munro17cst}.
    While the succinct suffix tree naturally supports the required access queries,
    the compressed suffix tree needs to be enhanced with a sampling data structure to support access to $\ISA_T[i]$ and $\LCP_T[i]$.
    In detail, it provides access to the PLCP array $\PLCP_T$.
Our idea is to provide access to $\SA_T$ and $\ISA_T$ in $O(\log^\epsilon n)$ time.
    For the former, we use a sampling of $\SA_T$ with $O(\epsilon^{-1} n \log \sigma)$ bits to obtain access to $\SA_T$ with $O(\log^\epsilon_\sigma n)$ time~\cite[Sect.\ 3.2]{grossi05csa}.
    For the latter, having this $\SA_T$ representation, we create a representation of $\ISA_T$ using $O(\epsilon^{-1} n \log \sigma)$ space in $O(n \log^\epsilon_\sigma n)$ time~\cite{munro12inverse}.
    This representation can access $\ISA_T[i]$ in $O(\log^\epsilon n)$ time.
\end{proof}

\begin{lemma}\label{lemLCPBWT}
    There is a data structure that can access $\ISA_T[i]$ and $\LCP_T[i]$
    in $\pi_a(n) = O(\log r \log^{O(1)} n)$ time using $\pi_s(n) = r \log^{O(1)} n + O(n)$ bits of working space.
    It can be constructed within this working space in $\pi_c(n) = O(n + r \log^{O(1)} n)$ time.
\end{lemma}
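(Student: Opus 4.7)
The plan is to represent the text $T$ implicitly through its run-length Burrows--Wheeler transform (RLBWT) together with a few auxiliary sampling structures. First, I would build the RLBWT of $T$ in $O(n)$ time, storing it in $O(r \log n) \subseteq r\log^{O(1)} n$ bits and augmenting it with polylogarithmic-time LF and inverse-LF support. On top of the RLBWT, I would place Gagie--Navarro--Prezza style suffix-array samples at the $O(r)$ run boundaries in both BWT order and text order, so that $\SA_T[i]$ and $\ISA_T[i]$ can be computed in $O(\log r \log^{O(1)} n)$ time by running inverse-LF (respectively LF) from a suitable sample until landing on a sampled position.

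Second, for $\LCP_T[i]$ I would exploit the identity $\LCP_T[i] = \PLCP_T[\SA_T[i]]$, which reduces LCP access to accessing $\PLCP_T$ at an arbitrary text position. For this I use a run-compressed PLCP representation storing only the $O(r)$ irreducible PLCP values and recovering the remaining entries using the slope inequality $\PLCP_T[i+1] \geq \PLCP_T[i] - 1$; this representation fits in $r\log^{O(1)} n$ bits and supports $\PLCP_T[j]$ access in $O(\log r \log^{O(1)} n)$ time. Composed with the $\SA_T$-access, this yields $\LCP_T[i]$ within the claimed bounds. The total space is $r\log^{O(1)} n + O(n)$ bits, where the $O(n)$ additive term absorbs plain bit vectors (e.g.\ for marking run boundaries or sampled positions in text order).

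For the construction time, I would combine a streaming RLBWT construction running in $O(n)$ time with algorithms that build the SA/ISA samples and the compressed PLCP representation directly on the RLBWT in $O(r \log^{O(1)} n)$ additional time; summed, this matches the claimed $O(n + r\log^{O(1)} n)$ bound. The main obstacle will be the PLCP module: both its construction and its access have to stay inside the $r\log^{O(1)} n + O(n)$ bit and $O(\log r \log^{O(1)} n)$ time budgets, and the straightforward approach of first materialising $\PLCP_T$ explicitly would blow up the working space. Avoiding this requires enumerating the irreducible PLCP values along the RLBWT directly and reconstructing non-irreducible values on the fly from the sampled suffix-array values; this is the technically delicate part of the argument.
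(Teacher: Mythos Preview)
Your proposal follows the same architecture as the paper's proof: build the RLBWT, equip it with an $r$-index--style structure to access $\SA_T$ and $\ISA_T$, and recover $\LCP_T[i]$ as $\PLCP_T[\SA_T[i]]$. The difference is only in packaging. The paper is terser: it cites Ohno et al.\ for the RLBWT and then invokes Kempa's result \cite[Thm.~5.1]{kempa19index} as a black box for the $\SA_T$/$\ISA_T$ component, leaving the PLCP part implicit (``like in the proof of Lemma~\ref{lemLCPCST}''). You instead unroll both pieces explicitly via Gagie--Navarro--Prezza run-boundary samples and an irreducible-PLCP representation. That is a legitimate alternative route, and your identification of the PLCP construction as the delicate step is accurate---it is exactly where the paper leans on citation rather than exposition.

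One caveat: your claim of an $O(n)$-time RLBWT construction within the $r\log^{O(1)}n + O(n)$-bit working-space budget is stronger than the $O(n\log r)$ online construction the paper cites, and would need a concrete reference to stand on its own. If you cannot point to one, the cleanest fix is to appeal to Kempa's construction result directly for the whole $\SA$/$\ISA$ package, which is what ultimately backs the paper's stated $\pi_c(n)$ bound.
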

\begin{proof}
Like in the proof of Lemma~\ref{lemLCPCST}, we provide access to $\LCP_T$ by $\PLCP_T$ and $\SA_T$:
First, we compute the run-length encoded BWT in $O(n \log r)$ time stored in $O(r \log n)$ bits of space~\cite{ohno17rlbwt}.
Next, we build a data structure answering
$\SA_T[i]$ and $\ISA_T[i]$ in $O(\log^{O(1)} n \log r)$ time
while using $O(r \log n + n)$ bits of space~\cite[Thm 5.1]{kempa19index}.
It can be constructed in $O(n + r \log^{O(1)} n)$ time with $O(n + r \log^{O(1)} n)$ bits of working space.

\end{proof}

\begin{lemma}\label{lemPhi}
  We can compute $\MB_T$ and $\ME_T$ in $O(n)$ time
  using $n\log n + O(\sigma \log n) + 4n + o(n)$ bits of total working space
  including the space for $\MB_T$ and $\ME_T$.
\end{lemma}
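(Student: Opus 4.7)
My plan is to realise $\mathcal{D}_T$ of Lemma~\ref{lemBuildBitVectors} by keeping $\RankNext_T$ explicitly in an $n\log n$-bit array together with Sadakane's succinct $2n+o(n)$-bit representation of $\PLCP_T$. Given both, access becomes $\PLCP_T[i]=\LCP_T[\ISA_T[i]]$ (a direct lookup in the succinct $\PLCP_T$) and $\LCP_T[\ISA_T[i]+1]=\PLCP_T[\RankNext_T[i]]$ (one $\RankNext_T$ pointer followed by a second lookup), so $\ell_i=\max\{\PLCP_T[i],\,\PLCP_T[\RankNext_T[i]]\}$ is computable in $O(1)$ time. Plugging this into the text-order scan of Lemma~\ref{lemBuildBitVectors} yields $\MB_T$ and $\ME_T$ in $O(n)$ total time.

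The construction proceeds in four stages. First, compute $\SA_T$ by SA-IS in $O(n)$ time using $n\log n + O(\sigma\log n)$ bits. Second, rewrite $\SA_T$ in place into $\RankPrev_T$ via the cycle-leader permutation technique of~\cite{Goto2014LZ}, which keeps the array footprint at $n\log n$ bits and needs only $O(n)$ marker bits. Third, run the K\"arkk\"ainen-Manzini-Puglisi $\PLCP$-scan on $\RankPrev_T$---which, by the property $\PLCP_T[i+1]\geq\PLCP_T[i]-1$, incurs only $O(n)$ amortized character comparisons against $T$---and emit each $\PLCP_T[i]$ on the fly into Sadakane's bit vector $H$ of length at most $2n$ as a unary-encoded delta, finally equipping $H$ with the rank/select data structure of Lemma~\ref{lemRank}; $\RankPrev_T$ is only read during this step and survives unchanged. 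Fourth, since $\RankNext_T$ and $\RankPrev_T$ are inverse permutations apart from the two endpoints $\SA_T[1]$ and $\SA_T[n]$ (where $\nil$ appears), invert the permutation held in the $n\log n$-bit array in place by the standard cycle-leader technique, again with $O(n)$ marker bits, turning $\RankPrev_T$ into $\RankNext_T$ in $O(n)$ time.

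The working-space peak occurs during the final scan: the $n\log n$-bit $\RankNext_T$, the $2n + o(n)$-bit succinct $\PLCP_T$, the $n$-bit arrays $\MB_T$ and $\ME_T$, and the $O(\sigma\log n) + o(n)$-bit residue from SA-IS and the auxiliary indices sum to $n\log n + O(\sigma\log n) + 4n + o(n)$ bits. I expect the main obstacle to be staying within this bound throughout the construction: a naive implementation would either allocate a fresh $n\log n$-bit array to hold $\RankNext_T$ alongside $\SA_T$ or $\RankPrev_T$, or first materialise an explicit $n\log n$-bit $\PLCP_T$ before compressing it, and either alternative inflates the working space by an additional $n\log n$ bits. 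The technical heart therefore lies in combining the in-place cycle-leader rewrites of~\cite{Goto2014LZ} with a streaming construction of Sadakane's bit vector $H$ during the KMP pass, so that only one $n\log n$-bit array is ever live.
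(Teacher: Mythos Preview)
Your proposal is correct in essence and shares the paper's central identity $\ell_i = \max\{\PLCP_T[i], \PLCP_T[\RankNext_T[i]]\}$, the use of Sadakane's $2n+o(n)$-bit PLCP representation, and the streaming construction of the latter from $\RankPrev_T$ via the K\"arkk\"ainen--Manzini--Puglisi scan. The peak-space accounting is also the same.

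Where you diverge is in obtaining $\RankNext_T$. The paper \emph{deletes} $\RankPrev_T$ after building $\succPLCP_T$ and then constructs $\RankNext_T$ from scratch by a second run of the Goto--Bannai algorithm, modified so that the $2\sigma$ linked lists produced by their simulated induced-sorting are inverted at the end. You instead keep $\RankPrev_T$ alive and use the observation that $\RankNext_T$ is its inverse permutation (modulo the two $\nil$ endpoints), so a standard in-place cycle-leader inversion with $n$ marker bits suffices. Your route is arguably cleaner and avoids re-running the whole simulated SA construction; the paper's route has the advantage of citing existing machinery verbatim.

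One caveat: your stages~1--2 as written do not match the cited reference. Goto--Bannai do \emph{not} first build $\SA_T$ and then rewrite it into $\RankPrev_T$ via a cycle-leader pass; they simulate an induced-sorting SA construction and emit $\RankPrev_T$ directly into the $n\log n$-bit array, using $O(\sigma\log n)$ bits of auxiliary pointers. Converting $\SA_T$ to $\Phi$ in place with only $O(n)$ extra bits is not covered by that paper and is not obviously a cycle-leader instance, since $\Phi[j] = \SA_T[\ISA_T[j]-1]$ requires $\ISA_T$ while you only hold $\SA_T$. Separately, bare SA-IS does not quite fit $n\log n + O(\sigma\log n)$ bits once recursion is counted. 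The fix is simply to collapse your stages~1--2 into the direct Goto--Bannai construction of $\RankPrev_T$ from $T$---which is exactly what the paper does---after which the rest of your argument, including the in-place inversion, goes through unchanged.
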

\begin{proof}
First, we construct $\RankPrev_T$ from $T$ using additional $O(\sigma \lg n)$ bits of working space~\cite{Goto2014LZ}. Our second target is the succinct representation of the PLCP array $\succPLCP_T$ of Sadakane~\cite{Sadakane2002Phi} using $2n + o(n)$ bits of space while allowing constant time random access to each PLCP value.
We can compute it from $\RankPrev_T$~\cite{Karkkainen2009PLCP}.
  After computing $\succPLCP_T$, we delete $\RankPrev_T$.
Subsequently, we construct $\RankNext_T$ from $T$ similarly to Goto et al.~\cite{Goto2014LZ}\footnote{They proposed a construction algorithm only for $\RankPrev_T$.
    However, we can modify the algorithm to construct $\RankNext_T$.
    Their algorithm simulates a suffix array construction algorithm, but
    uses $O(\sigma \lg n)$ pointers to positions in $\RankPrev_T$ at which new values are inserted.
    By these pointers, they represent $\RankPrev_T$ as $2\sigma$ linked lists.
    In \cite[Step 4 in Sect.~4.2.2]{Goto2014LZ}, they show how to invert these lists.
    Consequently, we can use this step to invert all lists after all elements have been inserted.
  }.
  Finally, we obtain $\succPLCP_T$ and $\RankNext_T$.
  All the steps can be executed in $O(n)$ time.

  In the following, we describe an algorithm for computing all MUSs of $T$ by using $\PLCP_T$ (in the succinct representation~$\succPLCP_T$) and $\RankNext_T$.
  As described in the proof of Lemma~\ref{lemBuildBitVectors},
  if we can compute $\ell_i = \max\{\LCP_T[\ISA_T[i]],$ $\LCP_T[\ISA_T[i]+1]\}$ in $\pi'_a(n)$ time
  for each text position $i$ with $1 \le i \le n$, then
  we can compute $\MB_T$ and $\ME_T$ in $O(n\cdot \pi'_a(n))$ time.
  Actually, we can achieve that $\pi'_a(n) = O(1)$ by using the fact that 
$\ell_i = \max\{\PLCP_T[i], \PLCP_T[\RankNext_T[i]]\}$.
Therefore, we can compute $\MB_T$ and $\ME_T$ in $O(n)$ time.
  Also, the total working space for the above procedure is $n\log n + O(\sigma \lg n) + 4n + o(n)$ bits
  including the space for $\MB_T$ and $\ME_T$.
\end{proof}
 \section{Conclusions}\label{sec:conc}
In this paper, we proposed compact data structures for
the interval SUS problem and the point SUS problem.
Our data structure is the first data structure of size $O(n)$ bits
for the interval SUS problem (resp. the point SUS problem).
On the one hand, for the interval SUS problem,
we proposed a data structure of size $2n+2m+o(n)$ bits
answering a query in output-sensitive $O(\occ)$ time,
where $n$ is the length of the input string,
$m$ is the number of MUSs in the input string, and
$\occ$ is the number of returned SUSs.
On the other hand, for the point SUS problem,
we proposed a data structure of size $\lceil(\log_2{3}+1)n\rceil+o(n)$ bits
answering a query in the same output-sensitive time.
The construction time and the working space of each data structures
depends on the complexity of simulating
the inverse suffix array and the LCP array of the input string.
For example, by using the succinct suffix tree~\cite{fischer18lz},
we can achieve $O(n)$ time and
$2n\log n + O(n)$ bits of working space
to construct our data structures.

\subsection{Open Problems}
We are unaware of an algorithm computing $\RankPrev_T$ or $\RankNext_T$ from the text~$T$ in-place for integer alphabets.
Since it is possible to store $\RankPrev_T$ in unary with the same representation as the succinct PLCP representation~\cite{Sadakane2002Phi} in $O(r)$ space~\cite{koeppl19lexparse}, 
where $r$ is the number of runs in the BWT, we wonder whether we can compute $\RankPrev_T$ in compressed space.
A possibility seems to adapt the in-place suffix array construction algorithm of Li et al.~\cite{Li2018SA}.

As future work, we want to extend our algorithm to compute SUSs with $k$ edits (or $k$ mismatches).
A SUS with $k$ edits is a substring that is unique even when changing $k$ arbitrary characters (allowing deletions, insertions and character exchanges).
Similarly, a SUS with $k$ mismatches is unique even when exchanging $k$ arbitrary characters.
It is known that SUSs with $k$ edits (resp.\ mismatches) contain SUSs with $k-1$ edits (resp.\ mismatches), where SUSs with no edits (resp.\ no mismatches) are the ordinary SUSs.

\subsection*{Acknowledgments}
This work was supported by JSPS KAKENHI Grant Numbers JP18F18120 (KD), JP18K18002 (YN), JP17H01697 (SI), JP16H02783 (HB), JP18H04098 (MT), and by JST PRESTO Grant Number JPMJPR1922 (SI).

\end{document}